\documentclass{ifacconf}

\usepackage{graphicx}
\usepackage{natbib}

\usepackage{amsmath, amscd, amsfonts, amssymb}
\usepackage{enumitem}
\newcommand{\mc}[1]{\mathcal{#1}}
\newcommand{\mb}[1]{\mathbb{#1}}

\DeclareMathOperator{\cl}{cl}
\newcommand{\orb}[3]{\{{#1}^{#2 k}(#3)\}}
\newcommand{\seq}[3]{({#1}^{#2 k}(#3))}

\begin{document}
\begin{frontmatter}

\title{On the Nonexistence of Continuous Immersions for Discrete-time Systems}
% Title, preferably not more than 10 words.
%\thanksref{sponsors}  for thanking sponsors
\thanks[sponsors]{This work is supported in part by ONR CLEVR-AI MURI (N00014-21-1-2431) and NSF grant CNS \# 2434331.}

\author[UMichECE]{Eron Ristich} 
\author[NEU]{Eduardo Sontag} 
\author[UMichECE]{Necmiye Ozay} 

\address[UMichECE]{Department of Electrical Engineering and Computer Science,
   University of Michigan, Ann Arbor, MI 48109 USA \\
   (e-mail: \{eristich, necmiye\}@umich.edu)}

\address[NEU]{Department of Electrical and Computer Engineering, and Department of Bioengineering, \\
   Northeastern University, Boston, MA 02115 USA \\
   (e-mail: e.sontag@northeastern.edu)}

\begin{abstract}                % Abstract of 50--100 words
Understanding when linear immersions of nonlinear dynamical systems exist is important since such immersions allow us to leverage the rich tools of linear system theory to analyze nonlinear dynamics. Recently, \cite{liu_non-existence_2023} showed that continuous-time dynamical systems that admit countably many but more than one $\omega$-limit sets cannot be immersed into finite dimensional linear systems with a one-to-one and continuous mapping. In this paper, we extend these results to discrete-time dynamics and show that similar obstructions exist also in discrete time. We further consider a generalization involving $\alpha$-limit sets. Several examples are provided to demonstrate the results.
\end{abstract}

\begin{keyword}
Nonlinear systems, Immersions, Koopman operator
\end{keyword}

\end{frontmatter}
%===============================================================================
%% NOTES:
%% There are a number of predefined theorem-like environments in
%% ifacconf.cls:
%%
%% \begin{thm} ... \end{thm}            % Theorem
%% \begin{lem} ... \end{lem}            % Lemma
%% \begin{claim} ... \end{claim}        % Claim
%% \begin{conj} ... \end{conj}          % Conjecture
%% \begin{cor} ... \end{cor}            % Corollary
%% \begin{fact} ... \end{fact}          % Fact
%% \begin{hypo} ... \end{hypo}          % Hypothesis
%% \begin{prop} ... \end{prop}          % Proposition
%% \begin{crit} ... \end{crit}          % Criterion

%% also, see the equation formating requirements:
%% use (\ref{eq:___}) instead of Eq.~\eqref{eq:___}

\section{Introduction}

The idea of embedding or immersing a complex nonlinear dynamical system into a higher-dimensional yet simpler one has important applications in analysis, design, and learning of nonlinear systems. Koopman operator theory formalizes this idea when the embedding dimension is infinite and embedding system is linear \citep{koopman1931hamiltonian,mauroy2020koopman,bevanda2021koopman}. However, in practice, one works with finite dimensional embeddings \citep{sontag1979polynomial,van1994locally,brunton2016koopman}. When such embeddings exist, one can devise algorithms to learn them or to directly infer properties of the underlying nonlinear system from data \citep{williams2015data,wang2023computation,haseli2021learning}.

Our work is inspired by the recent work by \cite{liu_non-existence_2023, liu_properties_2025, kvalheim2023linearizability}, which establishes necessary or sufficient conditions for (non)existence of such embeddings into finite dimensional linear systems in \emph{continuous-time}. Our goal is to extend some of these results to \emph{discrete-time} dynamical systems, which are arguably more common in the context of data-driven applications. In particular, we show that when a discrete-time dynamical system admits countably many but more than one $\omega$-limit sets or $\alpha$-limit sets and if it can be immersed into a finite dimensional linear system with a continuous mapping, this mapping collapses all limit sets; i.e., it cannot be one-to-one. Since most parametrizations (e.g., neural networks) used to learn the embedding functions are continuous, our results reveal an obstruction to obtaining one-to-one embedding functions that would allow the behavior of the underlying system to be uniquely recovered from that of the embedding system.

\emph{Notations:} We denote the closure of set $X$ by $\cl{X}$. The symbols $\mb{R}$ and $\mb{R}^+$ denote the real line and the set of non-negative real numbers. The symbols $\mb{Z}$ and $\mb{Z}^+$ denote the set of integers and the set of non-negative integers.

%\vspace{-7mm}
\section{Preliminaries}

\subsection{Discrete-time Dynamical Systems}
In this paper, we restrict our attention to a class of autonomous dynamical systems which have solutions given in discrete time by a sequence of states generated by an iterated function on a separable metric space $\mc{X}$. Specifically, given an initial state $x_0 \in \mc{X}$ and function $f: \mc{X} \rightarrow \mc{X}$,\footnote{Note that this implies that $\mc{X}$ is forward invariant under $f$.} we are interested in systems of the form
\begin{equation} \label{eq:system}
    x_{k+1} = f(x_k), \qquad x_k \in \mc{X}, k \in \mathbb{Z}^+,
\end{equation}
which have solutions given by the sequence $x_k$, defined by
\begin{equation} \label{eq:sys_solution}
    x_k = f^{k}(x_0), \qquad k \geq 0,
\end{equation}
where $f^{k}$ denotes the k-th iterate of $f$. Further, we denote the forward orbit generated by $f$ at $x_0$ as $\orb{f}{+}{x_0} := \{f^k(x_0) ~|~ k \in \mathbb{N}\}$, and the corresponding forward trajectory as the sequence $\seq{f}{+}{x_0} := (f^k(x_0))_{k \in \mathbb{N}}$. If $f$ is invertible as a map on $\mc{X}$,\footnote{That is, $f^{-1}: \mc{X} \rightarrow \mc{X}$, and $\mc{X}$ is backward invariant.} then in reverse time its solutions are denoted
\begin{equation}
    x_{-k} = f^{-k}(x_0), \qquad k > 0,
\end{equation}
where $f^{-k}$ denotes the $k$-th iterate of $f^{-1}$. Further, we define the backward orbit generated by $f$ at $x_0$ as $\orb{f}{-}{x_0} := \{f^{-k}(x_0) ~|~ k \in \mathbb{N}\}$, and the corresponding backward trajectory as the sequence $\seq{f}{-}{x_0} := (f^{-k}(x_0))_{k \in \mathbb{N}}$. 
% When $f$ is invertible as a map on $\mc{X}$, we denote the full orbit generated by $f$ at $x_0$ as $\orb{f}{}{x_0} := \{f^k(x_0) ~|~ k \in \mathbb{Z}\}$, and the corresponding trajectory is the bi-infinite sequence $\seq{f}{}{x_0} := (f^k(x_0))_{k \in \mathbb{Z}}$. When $f$ is not invertible, the full orbit (trajectory) is exactly the forward orbit (trajectory).

Throughout this paper, we assume that $f$ and its inverse, when it exists, are continuous\footnote{The continuity of $f$ guarantees that its limit sets are invariant (see \cite{alligood_chaos_2000}). That is, every point in any limit set maps under $f$ to another point in that same limit set. 
%A less restrictive condition is requiring $f$ and its inverse to be sequentially continuous in the neighborhood of its limit sets.
}. Additionally, we assume the following:
\begin{assum}[Path connectedness] \label{assum:path_connected}
    $\mc{X}$ is path connected.
\end{assum}
% Assuming path-connectedness of $\mc{X}$ excludes certain classes of chaotic systems which have fractal invariant sets. 
In general, as non-constant discrete-time trajectories have disconnected images, this assumption is more restrictive in discrete-time dynamical systems than continuous-time systems.

% \begin{rem}
%     If $\mc{X}$ is a connected manifold, then it is locally Euclidean, and thus path connected (\cite{lee_introduction_2011}).
% \end{rem}

\subsection{Definitions}
As in \cite{liu_non-existence_2023, liu_properties_2025}, we study nonlinear systems that contain multiple limit sets.

\begin{defn}[$\omega$-limit set] \label{def:omega_set}
    Given an initial state $\xi \in \mc{X}$, the $\omega$-limit set $\omega_{\mc{X}}(\xi)$ is the set of all $x \in \mc{X}$ such that there exists a sequence of positive indices $k_j \rightarrow \infty$ having $\lim_{j \rightarrow \infty} f^{k_j}(\xi) = x$ (\cite{hirsch_differential_2013}).
    % The $\omega$-limit set of an initial state $\xi \in \mc{X}$ is denoted by $\omega_{\mc{X}}(\xi)$. Following the definition from \cite{hirsch_differential_2013}, $\omega_{\mc{X}}(\xi)$ is the set of all $x \in \mc{X}$ such that there exists a sequence of indices $k_j \rightarrow \infty$ having $\lim_{j \rightarrow \infty} f^{k_j}(\xi) = x$.
\end{defn}

\begin{defn}[$\alpha$-limit set] \label{def:alpha_set}
    When $f$ is invertible as a map on $\mc{X}$, given an initial state $\xi \in \mc{X}$, the $\alpha$-limit set $\alpha_{\mc{X}}(\xi)$ is the set of all $x \in \mc{X}$ such that there exists a positive sequence of indices $k_j \rightarrow \infty$ having $\lim_{j \rightarrow \infty} f^{-k_j}(\xi) = x$. That is, $\alpha_{\mc{X}}(\xi)$ is the $\omega$-limit set of $\xi$ when time is reversed.
    % When $f$ is invertible as a map on $\mc{X}$, then the $\alpha$-limit set of an initial state $\xi \in \mc{X}$ is denoted by $\alpha_{\mc{X}}(\xi)$. Analogous to Def.~\ref{def:omega_set}, $\alpha_{\mc{X}}(\xi)$ is the set of all $x \in \mc{X}$ such that there exists a sequence of indices $k_j \rightarrow \infty$ having $\lim_{j \rightarrow \infty} f^{-k_j}(\xi) = x$. That is, $\alpha_{\mc{X}}(\xi)$ is the $\omega$-limit set of $\xi$ when time is reversed.
\end{defn}

When the domain of the $\alpha$ and $\omega$ limit sets is clear from context, no subscript is used (i.e., $\alpha(\cdot)$ and $\omega(\cdot)$).

\begin{defn}[Unbounded] \label{def:unbounded}
    For initial state $\xi \in \mc{X}$, the forward trajectory $\seq{f}{+}{\xi}$ is \textit{unbounded} if, for every bounded set $P \subseteq \mc{X}$, there exists positive $k$ such that $f^k(\xi) \notin P$. The forward trajectory is \textit{bounded} if it is not unbounded.
\end{defn}

The next lemma shows that if the trajectory of a linear system leaves every bounded set as in the above definition, it indeed leaves every bounded set permanently.

\begin{lem} \label{lma:linear_unboundedness}
    Suppose the system in \eqref{eq:system} is linear with $\mc{X} = \mathbb{R}^n$, having $x_{k+1} = f(x_k) := A x_{k}$ for some $A \in \mathbb{R}^{n \times n}$. For initial condition $\xi \in \mc{X}$, if $\seq{f}{+}{\xi}$ is unbounded, then $\lim_{k \rightarrow \infty} \|f^k(\xi)\| = \infty$, that is, for every bounded set $P \subseteq \mc{X}$, there exists $K \in \mathbb{Z}^+$ such that $f^k(\xi) \notin P$ for all $k \geq K$.
\end{lem}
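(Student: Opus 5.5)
We plan to use the real Jordan decomposition of $A$ together with the explicit form of powers of Jordan blocks. Since $\|f^k(\xi)\|$ changes only by bounded factors under a fixed change of coordinates, we may take $A = J$ in real Jordan form and $\xi = (\xi_1,\dots,\xi_m)$ split conformally with the blocks $J_1,\dots,J_m$. Then $f^k(\xi) = (J_1^k\xi_1,\dots,J_m^k\xi_m)$, and it is enough to show that some component has norm tending to infinity.

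\textbf{Step 1: growth rate of each component.} For a block with eigenvalue modulus $\rho$ (a real eigenvalue $\lambda$ with $|\lambda|=\rho$, or a complex pair written as $\rho R(\theta)$ with $R(\theta)$ a rotation) and size $d$, we have $J_i^k = \sum_{j=0}^{d-1}\binom{k}{j}\Lambda^{k-j}N^j$. Here $\Lambda$ is either $\lambda$ or $\rho R(\theta)$, it commutes with $N$, and $N$ is nilpotent. Let $p$ be the largest index with $N^p\xi_i \neq 0$. The dominant term is then $\binom{k}{p}\Lambda^{k-p}N^p\xi_i$. Because $R(\theta)$ is orthogonal, this term has norm exactly $\binom{k}{p}\rho^{k-p}\|N^p\xi_i\|$ (times a fixed constant), while every other term is $O(k^{p-1}\rho^k)$. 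Hence $\|J_i^k\xi_i\| = c_i k^{p}\rho^{k}(1+o(1))$ with $c_i>0$ whenever $\xi_i\neq 0$ and $\rho>0$. Components with $\rho = 0$ vanish after finitely many steps.

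\textbf{Step 2: dichotomy.} Each nonzero component behaves like $c\,k^p\rho^k$. Such a quantity either tends to $0$ ($\rho<1$), stays bounded ($\rho=1$, $p=0$), or tends to $\infty$ ($\rho>1$, or $\rho=1$ with $p\ge1$). If every component is of the first two types, the trajectory is bounded, which contradicts the hypothesis. So some component $i$ satisfies $\|J_i^k\xi_i\|\to\infty$. Therefore $\|f^k(\xi)\| \ge \|J_i^k\xi_i\|/C \to \infty$, where $C$ accounts for the change of basis. Finally, any bounded $P$ lies in a ball of some radius $r$, and choosing $K$ with $\|f^k(\xi)\|>r$ for all $k\ge K$ gives the stated conclusion.

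\textbf{Main obstacle.} The delicate point is the case of complex eigenvalues on or near the unit circle. There one must rule out cancellation, i.e. a trajectory that is large only along a subsequence. This is handled in Step 1 by the exact norm of the leading term, using orthogonality of $R(\theta)^{k-p}$, so the rotation cannot make that term small. All lower-order terms are a factor $1/k$ smaller, so they cannot cancel it.
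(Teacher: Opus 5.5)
Your proof is correct and follows essentially the same route as the paper. Both pass to Jordan form, expand $J_m^k=\sum_i \binom{k}{i}\lambda^{k-i}N^i$, and sort components into decaying, bounded, and divergent ones according to $|\lambda|$ and the nilpotency index, concluding that an unbounded trajectory must contain a component whose norm tends to infinity. Your version is in fact more careful than the paper's sketch: the per-block leading-term estimate in real Jordan form uses orthogonality of the rotation factor to rule out cancellation. The paper instead groups components into stable, unit and unstable subspaces and asserts the corresponding bounds without justification. One of those bounds, $\|A^k\xi_1\|\to\|\xi_1\|$, in general only holds as boundedness.
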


\begin{pf}
As the system is linear, we can decompose $\xi$ into three vectors, $\xi_s$, $\xi_1$, and $\xi_u$, corresponding to the stable ($|\lambda| < 1$), unit ($|\lambda| = 1$ for trivial Jordan blocks), and unstable ($|\lambda| > 1$ or $|\lambda| = 1$ with non-trivial Jordan blocks) invariant subspaces of $A$. Importantly, as $A$ admits a Jordan canonical form $J$, the behavior of $A^k$ on each invariant subspace is determined by the corresponding Jordan blocks in $J^k$.

As $J$ is block diagonal, consider the $m \times m$ Jordan block $J_m$ corresponding to eigenvalue $\lambda$. Its $k$-th power is
\begin{equation} \label{eq:jordan_block}
    J_m^k = (\lambda I+N)^k = \sum_{i=0}^{\min(k,m-1)} \binom{k}{i}\lambda^{\,k-i}N^i,
\end{equation}
where $N \neq 0$ is a nilpotent matrix with ones on the superdiagonal such that $N^m = 0$. It follows that
\begin{align}
    \lim_{k \rightarrow \infty} \|A^k \xi_s\| &= 0, \\
    \lim_{k \rightarrow \infty} \|A^k \xi_1\| &= \|\xi_1\| < \infty, \\
    \forall k \geq N, \|A^k \xi_u\| &\geq c \alpha^k k^j, 
\end{align}
for some $N \in \mathbb{Z}^+$, real constants $c > 0$, $\alpha \geq 1$, and integer $j \geq 0$. In the case of $\xi_u$, if there exists $\lambda = 1$ with non-trivial Jordan block, then $\alpha = 1$. In this case, if $\xi_u$ linearly depends on generalized eigenvectors of $\lambda = 1$ with rank greater than 1, then $j$ is strictly larger than 0. Otherwise, if $\xi_u$ linearly depends only on generalized eigenvectors of $\lambda = 1$ with rank 1, then $\|A^k \xi_u\| = \|\xi_u\| < \infty$. This implies that the only unbounded term is $\xi_u$, whose norm approaches infinity as $k \rightarrow \infty$. {\hfill ~}\qed
\end{pf}

\begin{defn}[Precompact] \label{def:Precompact}
    The set $S \subseteq \mc{X}$ is called \textit{precompact} if $\cl{S}$ is compact with respect to the topology on $\mc{X}$. In particular, given initial state $\xi \in \mc{X}$, the trajectory through $\xi$ is called \textit{forward precompact} if $\orb{f}{+}{\xi}$ is precompact. Similarly, the trajectory through $\xi$ is called \textit{backward precompact} if $\orb{f}{-}{\xi}$ is precompact.
\end{defn}

\begin{lem} \label{lma:omega_nonempty}
    For any $\xi \in \mc{X}$, the limit set $\omega(\xi)$ is nonempty if the trajectory through $\xi$ is forward precompact in $\mc{X}$. If the system is linear with $\mc{X} = \mathbb{R}^n$, then the converse is also true.
\end{lem}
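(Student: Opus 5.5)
The forward direction is a sequential compactness argument, and the converse reduces to Lemma~\ref{lma:linear_unboundedness}.

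\emph{Forward direction.} Suppose $\cl{\orb{f}{+}{\xi}}$ is compact in $\mc{X}$. Since $\mc{X}$ is a metric space, compactness of this closure is equivalent to sequential compactness. The sequence $(f^k(\xi))_{k\in\mathbb{N}}$ lies in $\cl{\orb{f}{+}{\xi}}$, so it has a convergent subsequence $f^{k_j}(\xi)\to x$ with $k_j\to\infty$ strictly increasing. The limit $x$ lies in the closure, hence in $\mc{X}$. By Definition~\ref{def:omega_set}, $x\in\omega(\xi)$, so $\omega(\xi)\neq\emptyset$. The only care needed is to choose the indices $k_j$ strictly increasing, so that $k_j\to\infty$ as the definition requires.

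\emph{Converse for linear systems.} Let $\mc{X}=\mathbb{R}^n$ and $f(x)=Ax$, and suppose $\omega(\xi)\neq\emptyset$. We argue by contrapositive: assume the trajectory is not forward precompact. By Heine--Borel in $\mathbb{R}^n$, a set is precompact if and only if it is bounded. So $\orb{f}{+}{\xi}$ is unbounded, and for every bounded $P$ some iterate lies outside $P$, i.e., $\seq{f}{+}{\xi}$ is unbounded in the sense of Definition~\ref{def:unbounded}. Lemma~\ref{lma:linear_unboundedness} then gives $\|f^k(\xi)\|\to\infty$. Now take any candidate $x\in\omega(\xi)$ with $f^{k_j}(\xi)\to x$. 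Convergence would force $\|f^{k_j}(\xi)\|\to\|x\|<\infty$, which contradicts $\|f^{k_j}(\xi)\|\to\infty$ along the subsequence. Hence $\omega(\xi)=\emptyset$, a contradiction, so the trajectory is forward precompact.

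\emph{Main obstacle.} The converse hinges on upgrading ``unbounded'' (leaving every bounded set at some time) to ``$\|f^k(\xi)\|\to\infty$'' (leaving permanently). This is exactly what Lemma~\ref{lma:linear_unboundedness} supplies, so I would simply cite it. The upgrade is essential: for a general nonlinear system, a trajectory can be unbounded yet return infinitely often to a compact region, so a nonempty $\omega$-limit set would not imply precompactness.
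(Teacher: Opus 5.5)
Your proof is correct and follows the paper's argument essentially step for step. For the forward direction you extract a convergent subsequence from the compactness of $\cl{\orb{f}{+}{\xi}}$; for the linear converse you use Lemma~\ref{lma:linear_unboundedness} to upgrade unboundedness to $\|f^k(\xi)\|\to\infty$, which rules out any convergent subsequence, and Heine--Borel (the paper cites Bolzano--Weierstrass) turns boundedness into precompactness, with the only cosmetic difference being that you phrase the converse as a contrapositive where the paper argues by contradiction.
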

\begin{pf}
Suppose the trajectory through $\xi$ is forward precompact in $\mc{X}$. Then by definition, there exists a subsequence $k_j$ of $k \rightarrow \infty$ such that $f^{k_j}(\xi)$ converges to a point $x$ in the closure of $\orb{f}{+}{\xi}$. Thus, $x \in \omega(\xi)$ and $\omega(\xi)$ is nonempty.

Now suppose that the system in \eqref{eq:system} is linear, i.e., $x_{k+1} = A x_{k}$ for some $A \in \mathbb{R}^{n \times n}$. Suppose $\omega(\xi)$ is nonempty. If the forward trajectory $\seq{f}{+}{\xi}$ is unbounded, then we know that, by Lemma~\ref{lma:linear_unboundedness}, $\lim_{k \rightarrow \infty} \|f^k(\xi)\| = \infty$. It follows that any subsequence $k_j$ has the same limit, and $\omega(\xi)$ is empty. As $\omega(\xi)$ is assumed to be nonempty, the forward trajectory $\seq{f}{+}{\xi}$ must be bounded, and by the Bolzano-Weierstrass theorem, the closure of $\orb{f}{+}{\xi}$ in $\mathbb{R}^n$ is compact, hence the trajectory through $\xi$ is forward precompact in $\mathbb{R}^n = \mc{X}$. {\hfill ~}\qed
\end{pf}

\begin{defn}[Domain of attraction/repulsion]
    Let $\mc{W}$ be the set of all $\omega$-limit sets of the system in \eqref{eq:system}, and $\mc{A}$ be the set of all $\alpha$-limit sets. For each $\Omega \in \mc{W}$, its domain of attraction on $\mc{X}$ is
    \begin{equation}
        D^+_{\mc{X}}(\Omega) = \{\xi \in \mc{X} ~|~ \omega(\xi) = \Omega\}.
    \end{equation}
    For each $\Gamma \in \mc{A}$, its domain of repulsion on $\mc{X}$ is
    \begin{equation}
        D^-_{\mc{X}}(\Gamma) = \{\xi \in \mc{X} ~|~ \alpha(\xi) = \Gamma\}.
    \end{equation}
\end{defn}

\begin{rem}
    Consider any $\Omega \in \mc{W}$. If $\xi \in D^+_{\mc{X}}(\Omega)$, then $\orb{f}{+}{\xi} \subseteq D^+_{\mc{X}}(\Omega)$, that is, $D^+_{\mc{X}}(\Omega)$ is forward invariant. Similarly, for any $\Gamma \in \mc{A}$, then $D^-_{\mc{X}}(\Gamma)$ is backward invariant.
\end{rem}

\begin{defn}[Closed basins]
    The system in \eqref{eq:system} has \textit{closed basins} if $D^+_{\mc{X}}(\Omega)$ is closed for all $\omega$-limit sets $\Omega \in \mc{W}$.
\end{defn}

\begin{lem} \label{lma:linear_closed_basins}
    Every linear system $x_{k+1} = A x_k$ with $A \in \mathbb{R}^{n \times n}$ and $\mc{X} = \mathbb{R}^n$ has closed basins. 
\end{lem}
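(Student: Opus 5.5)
The idea is to reduce to the part of $\mathbb{R}^n$ on which $A$ acts as an isometry, and there to compare limit sets in the Hausdorff distance. We interpret $\mc{W}$ as the collection of \emph{nonempty} $\omega$-limit sets. The empty set must be excluded: for $A=\mathrm{diag}(2,1/2)$ the set of points with empty $\omega$-limit set is $\{x_1\neq 0\}$, which is not closed.

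\textbf{Step 1 (bounded points).} Fix $\Omega\in\mc{W}$ and a sequence $\xi_m\in D^+(\Omega)$ with $\xi_m\to\xi$; we must show $\omega(\xi)=\Omega$. Use the decomposition $\mathbb{R}^n=E_s\oplus E_1\oplus E_u$ from the proof of Lemma~\ref{lma:linear_unboundedness}. Here $E_s$ is the stable part, $E_1$ is spanned by eigenvectors of unit-modulus eigenvalues with trivial Jordan blocks, and $E_u$ is the rest. By Lemmas~\ref{lma:omega_nonempty} and~\ref{lma:linear_unboundedness}, $\omega(\eta)\neq\emptyset$ if and only if the trajectory of $\eta$ is bounded. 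This in turn holds if and only if $\eta\in B:=E_s\oplus E_1$. Indeed, the $E_u$-component of $A^k\eta$ is $A^k\eta_u$, which diverges unless $\eta_u=0$; here we use the same Jordan-block estimate \eqref{eq:jordan_block}, and note that in the unit-eigenvalue, rank-one case the vector in fact belongs to $E_1$. Since each $\xi_m\in B$ and $B$ is a closed subspace, we get $\xi\in B$.

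\textbf{Step 2 (reduction to $E_1$).} Let $P$ denote the projection onto $E_1$ along $E_s\oplus E_u$. For $\eta\in B$ we have $A^k\eta - A^kP\eta = A^k\eta_s\to 0$, hence $\omega(\eta)=\omega(P\eta)$. Moreover $P\xi_m\to P\xi$, so it suffices to treat $\eta_m:=P\xi_m\to\eta:=P\xi$, all lying in $E_1$.

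The restriction $A|_{E_1}$ is diagonalizable over $\mathbb{C}$ with unit-modulus eigenvalues. Hence, in a real Jordan basis, it is block diagonal with $1$, $-1$ and $2\times 2$ rotation blocks. Consequently there is a norm $\|\cdot\|_*$ on $E_1$ for which $A|_{E_1}$ is an isometry, namely the Euclidean norm in that basis.

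\textbf{Step 3 (isometric case).} For $\eta\in E_1$ we show $\omega(\eta)=\cl\{A^k\eta\}$.
\begin{itemize}
\item Clearly $\omega(\eta)\subseteq\cl\{A^k\eta\}$.
\item For the reverse inclusion it suffices to show $\eta\in\omega(\eta)$; then invariance and closedness of $\omega(\eta)$ give $\cl\{A^k\eta\}\subseteq\omega(\eta)$.
\item The powers $A^k|_{E_1}$ lie in the compact group of $\|\cdot\|_*$-isometries. So some subsequence $A^{k_j}$ converges, and after thinning so that $k_{j+1}-k_j\to\infty$, we get $A^{k_{j+1}-k_j}\to I$.
\item Therefore $A^{k_{j+1}-k_j}\eta\to\eta$, i.e. $\eta\in\omega(\eta)$.
\end{itemize}

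\textbf{Step 4 (Hausdorff comparison).} Since $\|A^k\eta_m-A^k\eta\|_*=\|\eta_m-\eta\|_*$ for every $k$, the Hausdorff distance (in $\|\cdot\|_*$) between the compact sets $\cl\{A^k\eta_m\}=\Omega$ and $\cl\{A^k\eta\}=\omega(\eta)$ is at most $\|\eta_m-\eta\|_*\to 0$. Both sets are compact and $\Omega$ does not depend on $m$, so $\omega(\xi)=\omega(\eta)=\Omega$. Thus $\xi\in D^+(\Omega)$, and the basin is closed.

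The main obstacle is Step 3, the recurrence claim $\eta\in\omega(\eta)$ for the isometric part. Without it, the limit sets of the nearby points $\eta_m$ and of $\eta$ could not be compared directly through the orbit closures.
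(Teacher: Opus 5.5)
Your proof is correct in substance, but it takes a different route from the paper's. The paper uses no spectral decomposition at this point:
\begin{itemize}
\item It takes $S$, the span of $D^+_{\mc{X}}(\Omega)$, and notes by superposition that every trajectory in $S$ is bounded.
\item Since $S$ is finite dimensional, this gives $\|A^k x\|\le M\|x\|$ on $S$.
\item For $x_j\to x_\infty$ it then uses $\sup_k\|A^k x_\infty - A^k x_j\|\le M\|x_\infty-x_j\|\to 0$ to conclude $\omega(x_\infty)=\Omega$.
\end{itemize}
Your Steps 1--2 replace this power bound by an explicit stable/isometric splitting. Your Step 3 (recurrence, $\omega(\eta)=\cl\{A^k\eta\}$) is unnecessary. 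The estimate $\sup_k\|A^k\eta_m-A^k\eta\|_*=\|\eta_m-\eta\|_*$ already bounds the Hausdorff distance between $\omega(\eta_m)$ and $\omega(\eta)$ directly. Indeed, if $A^{k_i}\eta_m\to p$, then along a further subsequence $A^{k_i}\eta\to q\in\omega(\eta)$ with $\|p-q\|_*\le\|\eta_m-\eta\|_*$, and symmetrically. This direct comparison is exactly what the paper compresses into ``It follows that.'' Your route buys three things: an explicit justification of that final step, a structural by-product (every nonempty $\omega$-limit set is the orbit closure of a point of the isometric part), and the correct remark that $\emptyset$ must be excluded from $\mc{W}$. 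The paper uses that exclusion tacitly when it applies Lemma~\ref{lma:omega_nonempty} to points of $D^+_{\mc{X}}(\Omega)$.

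One point in Step 1 needs repair. With $E_1$ as you define it (eigenvectors belonging to trivial Jordan blocks), the identity $B=E_s\oplus E_1$ is false. For $A=\begin{pmatrix}1&1\\0&1\end{pmatrix}$ you get $E_s=E_1=\{0\}$, yet $e_1$ is fixed.

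Your parenthetical remark amounts to enlarging $E_1$ to the real span of all eigenvectors of unit-modulus eigenvalues. With that $E_1$ the identity holds. Steps 2--4 only need $E_1$ to be invariant with $A|_{E_1}$ semisimple of unit modulus, which is true. But then ``the rest'' $E_u$ is in general not $A$-invariant, so the sentence ``the $E_u$-component of $A^k\eta$ is $A^k\eta_u$'' does not justify the claim.

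You can fix this in either of two ways:
\begin{itemize}
\item Argue blockwise in the generalized eigenspaces. In a unit-modulus Jordan block, if $r\ge 2$ is the largest index with $v_r\neq 0$, the first coordinate of $J^k v$ grows like $\binom{k}{r-1}$. So $v$ has a bounded trajectory only if it is an eigenvector.
\item Argue as the paper does. $B$ is an invariant subspace all of whose trajectories are bounded, so $A|_B$ is power bounded. Hence its unit-modulus eigenvalues are semisimple, and $B$ splits into a stable part and a part on which $A$ is an isometry in a suitable norm.
\end{itemize}
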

\begin{pf}
Let $x_\infty$ be a limit point of $D^+_{\mc{X}}(\Omega)$ for some $\Omega \in \mc{W}$. Denote the span of $D^+_{\mc{X}}(\Omega)$ as $S$. Since $S$ contains $D^+_{\mc{X}}(\Omega)$ and is closed, $x_\infty \in S$. By superposition, as $D^+_{\mc{X}}(\Omega)$ is forward invariant, $S$ is also. 

By Lemma~\ref{lma:omega_nonempty}, all trajectories in $D^+_{\mc{X}}(\Omega)$ are forward precompact. By superposition, all trajectories in $S$ are also forward precompact. This means that the system restricted to $S$ is stable in the sense of Lyapunov, and thus there exists $M > 0$ such that $\|A^k x\| \leq M \|x\|$ for all $x \in S$ and $k \geq 0$. As $x_\infty$ is a limit point of $D^+_{\mc{X}}(\Omega)$, there exists a sequence $x_j$ in $D^+_{\mc{X}}(\Omega)$ such that $x_j \rightarrow x_\infty$. Then, for all $j > 0$, and for all $k \geq 0$,
\begin{equation}
\begin{split}
    \|f^k(x_\infty) - f^k(x_j)\| &= \|A^k(x_\infty - x_j)\| \\
    &\leq M\|x_\infty - x_j\| \xrightarrow[]{j \rightarrow \infty} 0.
\end{split}
\end{equation}
It follows that $\omega(x_\infty) = \omega(x_j) = \Omega$. Thus $x_\infty \in D^+_{\mc{X}}(\Omega)$, and we conclude that $D^+_{\mc{X}}(\Omega)$ is closed. {\hfill ~}\qed
\end{pf}

\begin{rem}
    When $A$ is invertible, then the time-reversed system $x_{k+1} = A^{-1} x_k$ also has closed basins, that is, $D^-_{\mc{X}}(\Gamma)$ is closed for all $\alpha$-limit sets $\Gamma \in \mc{A}$ of the original system.
\end{rem}

\begin{defn}[Immersion]
    The system in \eqref{eq:system} is \textit{immersed} in a system $z_{k+1} = g(z_k)$ on separable metric space $\mc{Z}$ if there exists a mapping $F: \mc{X} \rightarrow \mc{Z}$ satisfying
    \begin{equation}
        F \circ f = g \circ F.
    \end{equation}
    % The system in \eqref{eq:system} is \textit{immersed} in a system $z_{k+1} = g(z_k)$ on manifold $\mc{Z}$ if there exists a mapping $F: \mc{X} \rightarrow \mc{Z}$ such that for all initial states $\xi \in \mc{X}$ and all time $k \rightarrow \infty$,
    % \begin{equation}
    %     F(f^k(\xi)) = g^k(F(\xi)),
    % \end{equation}
    % where $g^k(F(\xi))$ is the solution to $z_{k+1} = g(z_k)$.
\end{defn}

\begin{rem}
    The term ``immersion'' is also frequently used in the study of maps between differentiable manifolds (see e.g. \cite{gallot_riemannian_2004}), and is unrelated to the notion of immersion between dynamical systems used throughout this work.
\end{rem}

\section{Main theorems}
% Thm. 1: Discrete-time analog of Zexiang's IFAC WC 2023 paper (for omega-limit sets and alpha-limit sets)
\begin{thm} \label{thm:super_theorem}
    Let Assumption \ref{assum:path_connected} hold, and assume that $f$ is invertible. Suppose further that
    \begin{enumerate}[
        leftmargin=*,
        label={(T\arabic*)}
    ]
        \item \label{cond:t1} $x_{k+1} = f(x_k)$ on $\mc{X}$ can be immersed by a continuous mapping $F$ in a system $z_{k+1} = g(z_k)$ on $\mc{Z}$, $g$ invertible and continuous with continuous inverse, where the lifted system and its time reversal both have closed basins;

        \item \label{cond:t2} Each trajectory of $x_{k+1} = f(x_k)$ on $\mc{X}$ is either forward precompact or backward precompact in $\mc{X}$;

        \item \label{cond:t3} The set $\mc{W} \cup \mc{A}$ is countable.
    \end{enumerate}
    Then, the set $\{F(S) ~|~ S \in \mc{W} \cup \mc{A}\}$ has exactly one maximal element, that is, there exists $S_{\max} \in \mc{W} \cup \mc{A}$ such that $F(S_{\max}) \supseteq F(S)$ for all $S \in \mc{W} \cup \mc{A}$. Further, if
    \begin{enumerate}[
        leftmargin=*,
        label={(T\arabic*)}
    ]
        \setcounter{enumi}{3}
        \item \label{cond:t4} For every $\Omega \in \mc{W}$ and $\Gamma \in \mc{A}$, there exists at least one forward precompact trajectory in $D^+_{\mc{X}}(\Omega)$ and backward precompact trajectory in $D^-_{\mc{X}}(\Gamma)$,
    \end{enumerate}
    then $\{F(S) ~|~ S \in \mc{W} \cup \mc{A}\}$ has exactly one element.
\end{thm}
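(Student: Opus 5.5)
We follow the continuous-time argument of \cite{liu_non-existence_2023}: partition $\mc{X}$ into countably many closed sets and apply Sierpi\'nski's theorem. Sierpi\'nski's theorem says that a connected compact Hausdorff space (a continuum) cannot be a countable union of two or more pairwise disjoint nonempty closed sets.

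\textbf{Step 1: transport limit sets through $F$.} Let $\xi\in\mc{X}$ have a forward precompact trajectory, so $\omega(\xi)\neq\emptyset$ by Lemma~\ref{lma:omega_nonempty}. Continuity of $F$ and $F\circ f^k=g^k\circ F$ give $F(\omega_{\mc{X}}(\xi))\subseteq\omega_{\mc{Z}}(F(\xi))$. Precompactness gives the reverse inclusion. Indeed, if $g^{k_j}(F(\xi))\to z$, pass to a subsequence with $f^{k_j}(\xi)\to x\in\omega(\xi)$; then $F(x)=z$. Hence $F(\omega(\xi))=\omega_{\mc{Z}}(F(\xi))$, and symmetrically $F(\alpha(\xi))=\alpha_{\mc{Z}}(F(\xi))$ for backward precompact trajectories. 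The sets $D^+_{\mc{X}}(\Omega)$ and $D^-_{\mc{X}}(\Gamma)$ are not closed in general, so we work with preimages of lifted basins instead. For each $S\in\mc{W}$ set $B_S:=F^{-1}\big(D^+_{\mc{Z}}(F(S))\big)$, and for $S\in\mc{A}$ set $B_S:=F^{-1}\big(D^-_{\mc{Z}}(F(S))\big)$. By \ref{cond:t1} and continuity of $F$, each $B_S$ is closed. By \ref{cond:t2}, every $\xi\in\mc{X}$ lies in some $B_S$: take $S=\omega(\xi)$ or $S=\alpha(\xi)$ and use Step 1. So $\mc{X}=\bigcup_{S\in\mc{W}\cup\mc{A}}B_S$, a countable union by \ref{cond:t3}.

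\textbf{Step 2: connectedness argument.} Suppose the family $\{F(S)\}$ has no single maximal element. Take a path $\gamma:[0,1]\to\mc{X}$ (Assumption~\ref{assum:path_connected}) joining points in two basins $B_{S_1}$ and $B_{S_2}$ whose images cannot be merged. Then $[0,1]=\bigcup_S\gamma^{-1}(B_S)$ is a countable union of closed sets.

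The main obstacle is disjointness. The sets $B_S$ coming from the $\omega$-family are pairwise disjoint when the $F(S)$ differ, because they are preimages of distinct basins in $\mc{Z}$; the same holds within the $\alpha$-family. However, an $\omega$-type set and an $\alpha$-type set overlap. My plan is to use a containment argument. If a point lies in $B_{S}\cap B_{S'}$ with $S\in\mc{W}$ and $S'\in\mc{A}$, its lifted forward and backward trajectories accumulate on $F(S)$ and $F(S')$. Using invariance and closedness of the lifted basins, together with continuity of $g^{-1}$, I would try to show that each of $F(S),F(S')$ is contained in the other's closed basin. This should let the relation ``$F(S)\subseteq F(S')$'' generate the structure needed.

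Concretely, I would define equivalence classes by merging overlapping closed sets. Sierpi\'nski's theorem applied to $[0,1]$ then forces a single class. Within that class I would show the images are totally ordered by inclusion, so that a unique maximal $F(S_{\max})$ exists; a maximal element is attained because the collection is countable and the sets sit inside closed basins. This ordering step is the delicate part. I would first try the case $\mc{W}$ alone, where the sets are disjoint and Sierpi\'nski applies directly, before handling the mixed case.

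\textbf{Step 3: under \ref{cond:t4}, all images coincide.} With \ref{cond:t4}, every $F(S)$ is realised as $\omega_{\mc{Z}}$ or $\alpha_{\mc{Z}}$ of some lifted precompact point. So $F(S)\subseteq F(S_{\max})$ implies that a point attracted to $F(S)$ lies in the closed invariant basin structure of $F(S_{\max})$. Equality of limit sets, $\omega_{\mc{Z}}(z)=F(S)$ and $\omega_{\mc{Z}}(z)=F(S_{\max})$, then forces $F(S)=F(S_{\max})$. This gives exactly one element.
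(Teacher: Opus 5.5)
Your framework (closed preimages $B_S$ of lifted basins, a countable cover of $\mc{X}$ from \ref{cond:t2}--\ref{cond:t3}, and Sierpi\'nski along a path) is the right one. The proof, however, breaks at the step you flag as the main obstacle. The missing idea is a lemma about the lifted system, which has $g$ invertible with continuous inverse and closed basins in both time directions: if $D^+_{\mc{Z}}(\Omega')\cap D^-_{\mc{Z}}(\Gamma')\neq\emptyset$, then $\Omega'=\Gamma'$. Mutual containment in each other's basins is not enough.

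You already have the first half. The full $g$-orbit $P$ of a point in the intersection lies in both basins, so by closedness $\Omega'\cup\Gamma'\subseteq\cl{P}$ does too. To finish, pick $p\in\Omega'$. Since $\Omega'$ is closed and invariant under $g^{\pm1}$, the backward orbit of $p$ stays in $\Omega'$, so $\Gamma'=\alpha_{\mc{Z}}(p)\subseteq\Omega'$; symmetrically $\Omega'\subseteq\Gamma'$.

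Hence an $\omega$-type set $B_S$ and an $\alpha$-type set $B_{S'}$ can meet only when $F(S)=F(S')$. Grouping the $B_S$ by image then gives countably many pairwise disjoint closed sets, and Sierpi\'nski applies directly. Your substitute does not work as stated, for two reasons:
\begin{enumerate}
\item Merging overlapping closed sets into equivalence classes produces countable unions that need not be closed, so Sierpi\'nski cannot be applied to the classes.
\item A countable chain of sets ordered by inclusion need not have a largest member, so ``countable and inside closed basins'' does not produce $S_{\max}$.
\end{enumerate}

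The second missing ingredient is the distinction that separates ``greatest element'' from ``single element.'' Call $S$ of the first kind if its basin contains a precompact trajectory in the appropriate time direction. For such $S$, your Step~1 makes $F(S)$ itself a limit set of $g$, with a nonempty piece of the cover. The Sierpi\'nski step then shows that all first-kind images coincide; call the common image $F(S_{\max})$.

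Now take $S\in\mc{W}$ such that $D^+_{\mc{X}}(S)$ contains no forward precompact trajectory, and pick $\xi\in D^+_{\mc{X}}(S)$. By \ref{cond:t2}, $\xi$ is backward precompact. So $\alpha_{\mc{X}}(\xi)$ is of the first kind, and $\alpha_{\mc{Z}}(F(\xi))=F(\alpha_{\mc{X}}(\xi))=F(S_{\max})$. On the other side you only get $F(S)\subseteq\omega_{\mc{Z}}(F(\xi))$. The lemma above gives $\omega_{\mc{Z}}(F(\xi))=\alpha_{\mc{Z}}(F(\xi))$, hence $F(S)\subseteq F(S_{\max})$; the case $S\in\mc{A}$ is symmetric.

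Your Step~3 is not an argument as written: nothing in $F(S)\subseteq F(S_{\max})$ places a point attracted to $F(S)$ in the basin of $F(S_{\max})$. Under \ref{cond:t4} the conclusion follows differently: every $S$ is of the first kind, so the Sierpi\'nski step already gives equality of all images.
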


That is to say, when conditions \ref{cond:t1}--\ref{cond:t4} hold, the immersion $F$ necessarily collapses all $\omega$-limit and $\alpha$-limit sets. When only conditions \ref{cond:t1}--\ref{cond:t3} hold, then Theorem \ref{thm:super_theorem} suggests that the immersion $F$ cannot fully distinguish the limit sets in $\mc{X}$, even when the limit sets of the original system are disjoint. 
% For example, suppose $\xi_1 \in \mc{X}$ and $\xi_2 \in \mc{X}$ have $\omega(\xi_1) \neq \omega(\xi_2)$, but $F(\omega(\xi_1)) \subseteq F(\omega(\xi_2))$. In this case, one cannot distinguish which $\omega$-limit set in the original system the trajectory $g^k(F(\xi_1))$ converges to.

\begin{rem}
    Condition \ref{cond:t2} requires that $\mc{X}$ does not include trajectories which, loosely speaking, come from and go to infinity.
\end{rem}

\begin{rem}
    Condition \ref{cond:t4} does not immediately follow from conditions \ref{cond:t1}--\ref{cond:t3}. For example, one can construct a discrete-time trajectory which is backward precompact but in forward time is unbounded as in Definition \ref{def:unbounded}, but returns to and accumulates in a compact region, causing it to have an $\omega$-limit set. However, if the only $\omega$-limit and $\alpha$-limit sets are equilibria or periodic orbits, then \ref{cond:t4} is trivially satisfied, as, e.g., in the former case, the constant trajectory contained by the limit set is both forward and backward precompact.
\end{rem}

\begin{cor} \label{cor:super_corollary}
    Suppose that conditions \ref{cond:t1}--\ref{cond:t4} hold, and $F$ is one-to-one. Then, $\mc{W} \cup \mc{A}$ has exactly one element.
\end{cor}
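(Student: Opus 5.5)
The corollary is immediate from Theorem~\ref{thm:super_theorem} once we use injectivity of $F$. My plan is to apply the second conclusion of the theorem and then pull the equality of images back through $F$.

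Since conditions \ref{cond:t1}--\ref{cond:t4} hold, Theorem~\ref{thm:super_theorem} gives that $\{F(S) ~|~ S \in \mc{W} \cup \mc{A}\}$ has exactly one element. So for any $S_1, S_2 \in \mc{W} \cup \mc{A}$ we have $F(S_1) = F(S_2)$.

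Now suppose $F$ is one-to-one. Then the induced map $S \mapsto F(S)$ on subsets of $\mc{X}$ is also injective. Indeed, if $x \in S_1$, then $F(x) \in F(S_2)$, so $F(x) = F(y)$ for some $y \in S_2$. Injectivity forces $x = y \in S_2$, which gives $S_1 \subseteq S_2$, and the reverse inclusion follows by symmetry. Hence $S_1 = S_2$, and $\mc{W} \cup \mc{A}$ has at most one element.

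It remains to show that $\mc{W} \cup \mc{A}$ is nonempty. This is the only point needing care: the theorem's phrase ``exactly one element'' already presupposes that the image set is nonempty. To justify it directly, take any $\xi \in \mc{X}$, which exists because $\mc{X}$ is path connected and hence implicitly nonempty. By \ref{cond:t2}, the trajectory through $\xi$ is forward or backward precompact. By Lemma~\ref{lma:omega_nonempty}, or its time-reversed analogue applied to $f^{-1}$, either $\omega(\xi)$ or $\alpha(\xi)$ is then a nonempty limit set. So $\mc{W} \cup \mc{A} \neq \emptyset$, and it has exactly one element.
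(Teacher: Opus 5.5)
Your proof is correct and is exactly the intended argument: the paper gives no separate proof, treating the corollary as immediate from Theorem~\ref{thm:super_theorem} combined with injectivity of $F$ (so $F(S_1)=F(S_2)$ forces $S_1=S_2$). Your nonemptiness check via \ref{cond:t2} and Lemma~\ref{lma:omega_nonempty} mirrors the opening line of the paper's proof of Theorem~\ref{thm:super_theorem}.
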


Corollary \ref{cor:super_corollary} implies that the original system has at most one $\omega$-limit or $\alpha$-limit set. A direct consequence of this, combined with Lemma \ref{lma:linear_closed_basins}, is the following corollary, which is helpful for proving the nonexistence of one-to-one immersions.
\begin{cor} \label{cor:super_nonexistence}
    Suppose that conditions \ref{cond:t2}, \ref{cond:t3}, and \ref{cond:t4} hold. If $\mc{X}$ contains more than one limit set, then, this system on $\mc{X}$ cannot be immersed by a one-to-one immersion $F$ in a system $z_{k+1} = g(z_k)$, having $g$ invertible and continuous, where the lifted system and its time reversal have closed basins (e.g., a linear system).
\end{cor}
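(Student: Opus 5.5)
The plan is a proof by contradiction that reduces everything to Corollary~\ref{cor:super_corollary}. Throughout I will read the statement in the standing setting of Theorem~\ref{thm:super_theorem}: Assumption~\ref{assum:path_connected} holds and $f$ is invertible, since otherwise $\mc{A}$ and condition \ref{cond:t2} are not defined. I will also read the immersion $F$ as continuous, as in the title and in condition \ref{cond:t1}; this is an assumption I am adding, since the corollary as written does not say it. Finally, ``more than one limit set'' will mean that $\mc{W} \cup \mc{A}$ has at least two elements.

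Suppose, for contradiction, that a one-to-one continuous immersion $F: \mc{X} \to \mc{Z}$ exists into a system $z_{k+1} = g(z_k)$ of the stated type.
\begin{itemize}
\item \textbf{Checking \ref{cond:t1}.} We need $g$ to be continuous and invertible with continuous inverse, with both the lifted system and its time reversal having closed basins. These are exactly the hypotheses placed on $g$, so \ref{cond:t1} holds.
\item \textbf{The linear example.} If $g(z) = Az$ on $\mathbb{R}^n$ with $A$ invertible, then $g$ and $g^{-1}(z) = A^{-1}z$ are continuous. Lemma~\ref{lma:linear_closed_basins} gives closed basins for the lifted system. The remark following that lemma gives closed basins for the time-reversed system $z_{k+1} = A^{-1}z_k$. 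So the linear case satisfies \ref{cond:t1}. If invertibility of $A$ is not stated explicitly, I will say that it is needed for the parenthetical.
\end{itemize}

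With \ref{cond:t1} in hand and \ref{cond:t2}--\ref{cond:t4} assumed, all hypotheses of Corollary~\ref{cor:super_corollary} are met, and $F$ is one-to-one. That corollary yields $|\mc{W} \cup \mc{A}| = 1$. This contradicts the assumption that $\mc{X}$ contains more than one limit set, so no such $F$ exists.

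I expect no real obstacle, since the argument is bookkeeping. The only delicate points are the ones flagged above:
\begin{itemize}
\item making explicit the implicit hypotheses on $f$ (invertibility, path connectedness) and the continuity of $F$;
\item confirming that the time-reversal half of \ref{cond:t1} is covered in the linear case by the remark after Lemma~\ref{lma:linear_closed_basins}, rather than by the lemma itself.
\end{itemize}
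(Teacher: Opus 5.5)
Your argument is correct and follows the paper's route: the paper obtains this corollary directly from Corollary~\ref{cor:super_corollary}, with Lemma~\ref{lma:linear_closed_basins} (and the remark after it, for invertible $A$) covering the linear case, and you are right that continuity of $F$, invertibility of $f$, and Assumption~\ref{assum:path_connected} must be read in as tacit hypotheses. One small correction to your check of \ref{cond:t1}: the statement only says $g$ is invertible and continuous, so continuity of $g^{-1}$ is not \emph{exactly} among the stated hypotheses but is one more tacit assumption to add to your list (it is automatic in the linear case).
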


Theorem \ref{thm:super_theorem} requires $f$ from the original system to be invertible. However, regardless of if $f$ is invertible, a similar result can be obtained for general discrete-time dynamical systems. Analogously to \cite{liu_non-existence_2023}, if we relax condition \ref{cond:t1} to only require the immersed system to have closed basins, and we tighten condition \ref{cond:t2} to require every trajectory to be forward precompact, then the following theorem also holds.

\begin{thm} \label{thm:main_theorem}
    Under Assumption \ref{assum:path_connected}, suppose further that
    \begin{enumerate}[
        leftmargin=*,
        label={(T\arabic*{$^\prime$})}
    ]
        \item \label{cond:t1p} $x_{k+1} = f(x_k)$ on $\mc{X}$ can be immersed by a continuous mapping $F$ in a system with closed basins $z_{k+1} = g(z_k)$ on $\mc{Z}$, having $g$ continuous;

        \item \label{cond:t2p} Every trajectory of $x_{k+1} = f(x_k)$ on $\mc{X}$ is forward precompact;

        \item \label{cond:t3p} The set $\mc{W}$ is countable.
    \end{enumerate}
    Then the set $\{F(\Omega) ~|~ \Omega \in \mc{W}\}$ has exactly one element.
\end{thm}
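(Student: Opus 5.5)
The plan follows the continuous-time argument of \cite{liu_non-existence_2023}: we partition $\mc{X}$ into countably many closed sets indexed by the images of the limit sets, and then invoke Sierpi\'nski's theorem. That theorem says a path-connected (indeed, any connected compact Hausdorff, in particular any path image $[0,1]$) space cannot be written as a countable union of more than one pairwise disjoint nonempty closed sets.

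\textbf{Step 1: transfer basins through $F$.} By \ref{cond:t2p} and Lemma~\ref{lma:omega_nonempty}, every $\xi\in\mc{X}$ has nonempty $\omega(\xi)$, so $\mc{X}=\bigcup_{\Omega\in\mc{W}} D^+_{\mc{X}}(\Omega)$. For $\Omega\in\mc{W}$, pick $\xi\in D^+_{\mc{X}}(\Omega)$. The closure of $\{f^k(\xi)\}$ is compact, so by continuity of $F$ and the relation $F\circ f^k=g^k\circ F$, the forward orbit of $F(\xi)$ under $g$ lies in a compact set. We then claim $\omega(F(\xi))=F(\omega(\xi))$.
\begin{itemize}
\item The inclusion $\supseteq$ is immediate from continuity of $F$.
\item For $\subseteq$: if $g^{k_j}(F\xi)\to z$, pass to a subsequence with $f^{k_j}(\xi)\to x\in\omega(\xi)$ (precompactness), so $z=F(x)$.
\end{itemize}
Hence $\omega_{\mc{Z}}(F(\xi))=F(\Omega)$ depends only on $\Omega$. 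It follows that
\[
D^+_{\mc{X}}(\Omega)\subseteq F^{-1}\bigl(D^+_{\mc{Z}}(F(\Omega))\bigr).
\]

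\textbf{Step 2: build a closed partition.} Let $\mc{V}=\{F(\Omega)\mid \Omega\in\mc{W}\}$, which is countable by \ref{cond:t3p}. For $V\in\mc{V}$ set $C_V=F^{-1}(D^+_{\mc{Z}}(V))$. These sets have the following properties:
\begin{itemize}
\item Each $C_V$ is closed, since $F$ is continuous and the lifted system has closed basins by \ref{cond:t1p}.
\item Distinct $V$ give disjoint basins in $\mc{Z}$, because $\omega$-limit sets of a point are unique; hence the $C_V$ are pairwise disjoint.
\item Each $C_V$ is nonempty.
\item By Step 1, $\bigcup_V C_V\supseteq\bigcup_\Omega D^+_{\mc{X}}(\Omega)=\mc{X}$.
\end{itemize}
So $\mc{X}$ is a countable disjoint union of nonempty closed sets $C_V$.

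\textbf{Step 3: conclude via Sierpi\'nski.} Suppose $|\mc{V}|\ge2$. Choose points in two different $C_V$ and, using Assumption~\ref{assum:path_connected}, a path $\gamma:[0,1]\to\mc{X}$ joining them. Then $[0,1]=\bigcup_V\gamma^{-1}(C_V)$ is a countable disjoint union of closed sets, at least two of which are nonempty. Sierpi\'nski's theorem forbids this, so $|\mc{V}|=1$, which is the claim.

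The main obstacle is Step 1, specifically the identity $\omega_{\mc{Z}}(F(\xi))=F(\omega(\xi))$. Only the $\subseteq$ direction needs precompactness in $\mc{X}$, and that is exactly where \ref{cond:t2p} is used. Also needed there is that $F(\Omega)$ is actually an $\omega$-limit set of $g$, so that $D^+_{\mc{Z}}(F(\Omega))$ is among the basins assumed closed; this follows from the same identity, since $F(\Omega)=\omega_{\mc{Z}}(F(\xi))$.
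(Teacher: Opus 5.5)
Your proof is correct and follows essentially the same route as the paper: transfer basins via $F(\omega_{\mc{X}}(\xi))=\omega_{\mc{Z}}(F(\xi))$ (the paper's Lemma~\ref{lma:immersed_limit_set_equality} and Corollary~\ref{cor:immersed_domain_subset}), obtain a countable disjoint closed cover indexed by the images $F(\Omega)$, and invoke Sierpi\'nski. The only differences are cosmetic. You pull the basins back to $\mc{X}$ and then to $[0,1]$ along a path, whereas the paper intersects them with $F(\mc{X})$ and cites the path-connected extension of Sierpi\'nski's theorem. Indexing by distinct images $V$ also cleanly handles different $\Omega$ that share the same image.
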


That is to say, when conditions \ref{cond:t1p}--\ref{cond:t3p} hold, the immersion $F$ necessarily collapses all $\omega$-limit sets. Similarly, if $f$ is invertible, the theorem also holds for the $\omega$-limit sets of the time reversed system, that is, the $\alpha$-limit sets of the original system.

When we restrict ourselves to one-to-one immersions, which we often seek to identify in practice, a direct corollary of Theorem \ref{thm:main_theorem} is the following.
\begin{cor} \label{cor:one_to_one}
    Suppose that conditions \ref{cond:t1p}--\ref{cond:t3p} hold, and $F$ is one-to-one, then $\mc{W}$ has exactly one element.
\end{cor}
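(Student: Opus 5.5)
The statement to prove is Corollary~\ref{cor:one_to_one}: under \ref{cond:t1p}--\ref{cond:t3p} with $F$ one-to-one, $\mc{W}$ has exactly one element. The plan is to get this directly from Theorem~\ref{thm:main_theorem} in two steps: first show $\mc{W}$ is nonempty, then show that injectivity of $F$ forces any two $\omega$-limit sets to coincide.

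\emph{Nonemptiness.} Since $\mc{X}$ is path connected, it is nonempty, so pick any $\xi \in \mc{X}$. By \ref{cond:t2p}, the trajectory through $\xi$ is forward precompact. Lemma~\ref{lma:omega_nonempty} then gives that $\omega(\xi)$ is nonempty. Hence $\omega(\xi) \in \mc{W}$ and $\mc{W} \neq \emptyset$.

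\emph{Uniqueness.} Take any $\Omega_1, \Omega_2 \in \mc{W}$. Theorem~\ref{thm:main_theorem} says $\{F(\Omega) ~|~ \Omega \in \mc{W}\}$ has exactly one element, so $F(\Omega_1) = F(\Omega_2)$. Because $F$ is one-to-one, it is injective as a map on subsets, since $\Omega = F^{-1}(F(\Omega))$ for every $\Omega \subseteq \mc{X}$. Therefore $\Omega_1 = F^{-1}(F(\Omega_1)) = F^{-1}(F(\Omega_2)) = \Omega_2$, so $\mc{W}$ contains exactly one element.

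There is no real obstacle here. The only point needing care is noting that nonemptiness of $\mc{W}$ follows from \ref{cond:t2p} via Lemma~\ref{lma:omega_nonempty}, although Theorem~\ref{thm:main_theorem}'s ``exactly one element'' already presupposes that $\mc{W}$ is nonempty.
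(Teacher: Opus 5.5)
Your proof is correct and takes the same approach as the paper, which presents this corollary as a direct consequence of Theorem~\ref{thm:main_theorem} without writing out an argument. You make explicit the two steps the paper leaves implicit: $\mc{W}\neq\emptyset$ follows from \ref{cond:t2p} and Lemma~\ref{lma:omega_nonempty}, and injectivity of $F$ turns $F(\Omega_1)=F(\Omega_2)$ into $\Omega_1=\Omega_2$.
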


Lastly, the discrete-time analogue to Corollary 3 of \cite{liu_non-existence_2023} is the following, which immediately follows from Theorem \ref{thm:main_theorem}, Corollary \ref{cor:one_to_one}, and Lemma \ref{lma:linear_closed_basins}.
\begin{cor} \label{cor:main_corollary}
    Suppose that conditions \ref{cond:t2p} and \ref{cond:t3p} hold. If $\mc{X}$ contains more than one $\omega$-limit set, then this system on $\mc{X}$ cannot be immersed by a continuous one-to-one immersion $F$ in a system with closed basins $z_{k+1} = g(z_k)$, having $g$ continuous.
\end{cor}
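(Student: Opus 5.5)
The statement to prove is Corollary~\ref{cor:main_corollary}. I would argue by contradiction. Suppose conditions \ref{cond:t2p} and \ref{cond:t3p} hold, $\mc{X}$ contains at least two distinct $\omega$-limit sets $\Omega_1 \neq \Omega_2$ in $\mc{W}$, and there is a continuous one-to-one $F:\mc{X}\to\mc{Z}$ with $F\circ f = g\circ F$. Here $g$ is continuous and $z_{k+1}=g(z_k)$ has closed basins. These hypotheses are exactly \ref{cond:t1p}, and Assumption~\ref{assum:path_connected} is in force throughout, so all of \ref{cond:t1p}--\ref{cond:t3p} hold.

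Theorem~\ref{thm:main_theorem} then gives that $\{F(\Omega) \mid \Omega\in\mc{W}\}$ has exactly one element. Hence $F(\Omega_1)=F(\Omega_2)$. Since $F$ is one-to-one, it is injective on subsets: for $x\in\Omega_1$ there is $y\in\Omega_2$ with $F(x)=F(y)$, so $x=y$. This gives $\Omega_1\subseteq\Omega_2$, and by symmetry $\Omega_1=\Omega_2$, a contradiction. Equivalently, I can cite Corollary~\ref{cor:one_to_one} directly: it says $\mc{W}$ has exactly one element, which contradicts $|\mc{W}|\ge 2$.

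The linear case is the parenthetical remark that such $g$ may be linear. When $g(z)=Az$ on $\mc{Z}=\mb{R}^n$, Lemma~\ref{lma:linear_closed_basins} supplies the closed-basins hypothesis, and $g$ is clearly continuous. So no continuous one-to-one immersion into a finite-dimensional linear system exists either.

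There is no real obstacle, since all the substance is in Theorem~\ref{thm:main_theorem}. The only point that needs care is that $F(\Omega_1)=F(\Omega_2)$ forces $\Omega_1=\Omega_2$, and this uses injectivity of $F$ on sets.
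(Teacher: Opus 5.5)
Your proposal is correct and follows the paper's route. The paper states that this corollary follows immediately from Theorem~\ref{thm:main_theorem}, Corollary~\ref{cor:one_to_one}, and Lemma~\ref{lma:linear_closed_basins}, which is exactly your reduction: a putative one-to-one $F$ supplies \ref{cond:t1p}, injectivity turns $F(\Omega_1)=F(\Omega_2)$ into $\Omega_1=\Omega_2$, and Lemma~\ref{lma:linear_closed_basins} covers the linear case.
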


When $f$ is invertible as a map on $\mc{X}$, Theorem~\ref{thm:main_theorem} and Corollary~\ref{cor:main_corollary} also apply to the $\alpha$-limit sets of the system, as they are exactly the $\omega$-limit sets of the time reversed system.

\subsection{Proof of Theorem \ref{thm:super_theorem}}

To prove Theorem \ref{thm:super_theorem}, we first introduce the following lemmas revealing properties of immersions and of systems with closed basins. The proofs of these lemmas closely follow those of Lemmas 5, 6, 7, and 8 of \cite{liu_properties_2025}, but are adapted to the discrete-time setting. The assumption that allows the continuous-time proofs in \cite{liu_properties_2025} to be adapted to discrete-time is, in lieu of sufficient smoothness, the continuity of $f$, $g$, and their inverses, when they exist.

\begin{lem} \label{lma:omega_alpha_limit_equal}
    Suppose that $f$ is invertible, and the system $x_{k+1} = f(x_k)$ on $\mc{X}$ and its time reversal $x_{k+1} = f^{-1}(x_k)$ on $\mc{X}$ have closed basins. Then, for any $\omega$-limit set $\Omega \subseteq \mc{X}$ and $\alpha$-limit set $\Gamma \subseteq \mc{X}$ of the system, $D_{\mc{X}}^+(\Omega) \cap D_{\mc{X}}^-(\Gamma) \neq \emptyset$ implies that $\Omega = \Gamma$.
\end{lem}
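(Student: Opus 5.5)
The plan is to fix a point $\xi \in D^+_{\mc{X}}(\Omega) \cap D^-_{\mc{X}}(\Gamma)$, so that $\omega(\xi) = \Omega$ and $\alpha(\xi) = \Gamma$, and to prove the two inclusions $\Gamma \subseteq \Omega$ and $\Omega \subseteq \Gamma$ separately. Both inclusions use the same three ingredients.

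\emph{(i) Invariance of the full orbit in both basins.} Because $f$ is invertible, the whole orbit $\{f^k(\xi) ~|~ k \in \mathbb{Z}\}$ lies in both $D^+_{\mc{X}}(\Omega)$ and $D^-_{\mc{X}}(\Gamma)$. Indeed, shifting the starting index by a finite amount changes neither the set of forward accumulation points nor the set of backward accumulation points.

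\emph{(ii) Limit sets are closed and invariant.} Limit sets are closed, being sets of accumulation points. They are also invariant under both $f$ and $f^{-1}$. For example, if $f^{k_j}(\xi) \to y$, then continuity of $f$ and $f^{-1}$ gives $f^{k_j \pm 1}(\xi) \to f^{\pm 1}(y)$, and the same argument works for $\alpha$-limit sets. This is where the standing continuity of $f$ and its inverse is used.

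\emph{(iii) The closed basin hypothesis.} Both $D^+_{\mc{X}}(\Omega)$ and $D^-_{\mc{X}}(\Gamma)$ are closed, by the assumption on the system and on its time reversal.

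\emph{Proof of $\Gamma \subseteq \Omega$.} Take $y \in \Omega$, so $y = \lim_j f^{k_j}(\xi)$ for some $k_j \to \infty$. By (i), each $f^{k_j}(\xi)$ lies in $D^-_{\mc{X}}(\Gamma)$. Since this set is closed by (iii), $y \in D^-_{\mc{X}}(\Gamma)$, that is, $\alpha(y) = \Gamma$. On the other hand, by (ii) the backward orbit of $y$ stays in $\Omega$, and $\Omega$ is closed. Hence every backward accumulation point of $y$ lies in $\Omega$, so $\Gamma = \alpha(y) \subseteq \Omega$.

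\emph{Proof of $\Omega \subseteq \Gamma$.} This is symmetric. Take $z \in \Gamma$, so $z = \lim_j f^{-k_j}(\xi)$. By (i) these points lie in the closed set $D^+_{\mc{X}}(\Omega)$, so $\omega(z) = \Omega$. The forward orbit of $z$ stays in the closed invariant set $\Gamma$, so $\Omega = \omega(z) \subseteq \Gamma$.

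The main obstacle is degenerate emptiness. The first inclusion needs a point $y \in \Omega$, and the second needs a point $z \in \Gamma$. So the argument gives $\Omega = \Gamma$ whenever both sets are nonempty, and also when both are empty. If exactly one is empty, say $\Omega = \emptyset$ and $\Gamma \neq \emptyset$, the second step only yields the trivial inclusion $\emptyset \subseteq \Gamma$. In that case I would first try to use the relation $\omega(z) = \Omega = \emptyset$, valid for all $z \in \Gamma$, together with invariance of $\Gamma$ to reach a contradiction. If that fails, I would adopt the convention, implicit in how the lemma is applied through condition \ref{cond:t2}, that the limit sets under consideration are nonempty, and add it explicitly as a hypothesis of the lemma.
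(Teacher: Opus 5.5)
Your argument is correct and is essentially the paper's proof. The full orbit lies in $D^+_{\mc{X}}(\Omega)\cap D^-_{\mc{X}}(\Gamma)$, closedness of the basins gives $\Omega\subseteq D^-_{\mc{X}}(\Gamma)$ and $\Gamma\subseteq D^+_{\mc{X}}(\Omega)$, and closedness plus two-sided invariance of limit sets then yield $\Gamma\subseteq\Omega$ and $\Omega\subseteq\Gamma$; your emptiness caveat also applies to the paper, whose step $\Gamma\subseteq\cl\Omega=\Omega$ likewise needs a point of $\Omega$, so your fallback convention that limit sets are nonempty is the right one. It is forced anyway: for $x_{k+1}=2x_k$ on $\mathbb{R}$ one has $D^+_{\mc{X}}(\emptyset)\cap D^-_{\mc{X}}(\{0\})=\mathbb{R}\setminus\{0\}$ while $\emptyset\neq\{0\}$, and $D^+_{\mc{X}}(\emptyset)$ is not closed, so Lemma~\ref{lma:linear_closed_basins} itself only holds for nonempty limit sets.
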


\begin{pf}
    % Let $x \in D_{\mc{X}}^+(\Omega) \cap D_{\mc{X}}^-(\Gamma)$. Denote the orbit through $x$ as $P := \orb{f}{}{x}$.
    Let $x \in D_{\mc{X}}^+(\Omega) \cap D_{\mc{X}}^-(\Gamma)$. Denote the orbit through $x$ as $P := \{f^k(x) ~|~ k \in \mathbb{Z}\}$. Then by the definition of limit sets, $P \subseteq D_{\mc{X}}^+(\Omega) \cap D_{\mc{X}}^-(\Gamma)$ and $\Omega \cup \Gamma \subseteq \cl{P}$. In particular, as the system and its time reversal have closed basins, then $D_{\mc{X}}^+(\Omega)$ and $D_{\mc{X}}^-(\Gamma)$ are closed, and we have
    \begin{equation}
        \Omega \cup \Gamma \subseteq \cl{P} \subseteq D_{\mc{X}}^+(\Omega) \cap D_{\mc{X}}^-(\Gamma).
    \end{equation}
    Thus, $\Omega \subseteq D_{\mc{X}}^-(\Gamma)$ and $\Gamma \subseteq D_{\mc{X}}^+(\Omega)$. Limit sets are closed, and as $f$ and its inverse are continuous, the limit sets are also both forward and backward invariant (see \cite{alligood_chaos_2000}). As $\Omega$ is closed and invariant, and $\Omega \subseteq D_{\mc{X}}^-(\Gamma)$, then $\Gamma \subseteq \cl{\Omega} = \Omega$. Similarly, $\Omega \subseteq \cl{\Gamma} = \Gamma$. We conclude that $\Omega = \Gamma$. {\hfill ~}\qed
\end{pf}

\begin{lem} \label{lma:immersed_limit_set_equality}
    Let $F$ be a continuous immersion that maps $x_{k+1} = f(x_k)$ on $\mc{X}$ to $z_{k+1} = g(z_k)$ on $\mc{Z}$. For any $\xi \in \mc{X}$, if $\omega_{\mc{X}}(\xi)$ is non-empty, then $F(\omega_{\mc{X}}(\xi)) \subseteq \omega_{\mc{Z}}(F(\xi))$. Furthermore, if the trajectory through $\xi$ is forward precompact in $\mc{X}$, then $F(\omega_{\mc{X}}(\xi)) = \omega_{\mc{Z}}(F(\xi))$.
\end{lem}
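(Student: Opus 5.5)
The first inclusion will come directly from continuity of $F$ and the conjugacy relation $F\circ f = g\circ F$. Iterating the conjugacy gives $F\circ f^k = g^k\circ F$ for all $k \ge 0$. Take any $x \in \omega_{\mc{X}}(\xi)$, with indices $k_j \to \infty$ and $f^{k_j}(\xi) \to x$. By continuity of $F$,
\begin{equation*}
g^{k_j}(F(\xi)) = F(f^{k_j}(\xi)) \to F(x).
\end{equation*}
So $F(x) \in \omega_{\mc{Z}}(F(\xi))$ by Definition~\ref{def:omega_set}, applied to the lifted system. This proves $F(\omega_{\mc{X}}(\xi)) \subseteq \omega_{\mc{Z}}(F(\xi))$, and nonemptiness plays no real role here beyond making the statement non-vacuous.

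For the reverse inclusion under forward precompactness, I take $z \in \omega_{\mc{Z}}(F(\xi))$, with $k_j \to \infty$ and $g^{k_j}(F(\xi)) \to z$. The points $f^{k_j}(\xi)$ lie in $\orb{f}{+}{\xi}$, whose closure is compact in $\mc{X}$. Since $\mc{X}$ is a metric space, compactness is sequential, so there is a subsequence $k_{j_i}$ with $f^{k_{j_i}}(\xi) \to x$ for some $x \in \cl{\orb{f}{+}{\xi}}$. Because $k_{j_i} \to \infty$, this $x$ belongs to $\omega_{\mc{X}}(\xi)$.

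Applying continuity of $F$ again gives
\begin{equation*}
F(x) = \lim_i F(f^{k_{j_i}}(\xi)) = \lim_i g^{k_{j_i}}(F(\xi)) = z,
\end{equation*}
where the last equality holds because a subsequence of a convergent sequence has the same limit (limits are unique in the metric space $\mc{Z}$). Hence $z \in F(\omega_{\mc{X}}(\xi))$, which gives equality.

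The only delicate step is the extraction of a convergent subsequence. It needs precompactness in $\mc{X}$ itself, not merely boundedness, and it needs sequential compactness, which is exactly why metrizability matters. Everything else is routine.
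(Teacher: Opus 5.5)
Your proof is correct and follows the paper's own argument: the forward inclusion comes from continuity of $F$ and the iterated conjugacy $F\circ f^k = g^k\circ F$, and the reverse inclusion comes from extracting a convergent subsequence of $f^{k_j}(\xi)$ in the compact closure of the forward orbit and then matching limits in $\mc{Z}$. Your remarks on sequential compactness in metric spaces and on uniqueness of limits state explicitly what the paper uses implicitly.
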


\begin{pf}
    First, we show that $F(\omega_{\mc{X}}(\xi)) \subseteq \omega_{\mc{Z}}(F(\xi))$. Suppose $p \in \omega_{\mc{X}}(\xi)$. Then, pick a sequence of indices $k_i \rightarrow \infty$ such that $f^{k_i}(\xi) \rightarrow p$ as $k_i \rightarrow \infty$. Then, by the definition of an immersion and the continuity of $F$, $g^{k_i}(F(\xi)) = F(f^{k_i}( \xi))$ has the limit $F(p) \in \omega_{\mc{Z}}(F(\xi))$.

    Conversely, suppose that $q^\prime \in \omega_{\mc{Z}}(F(\xi))$, and pick a sequence of indices $k_i \rightarrow \infty$ so that $F(f^{k_i}(\xi)) = g^{k_i}(F(\xi)) \rightarrow q^\prime \in \mc{Z}$. As $(f^{k_i}(\xi))$ is a subsequence of the forward precompact trajectory through $\xi$ in $\mc{X}$, then as a set it is itself precompact, and there is a subsequence $k_{i_j}$ of $k_i$ and $p \in \omega_{\mc{X}}(\xi)$ such that $f^{k_{i_j}}(\xi) \rightarrow p$. Then, as $k_{i_j}$ is a subsequence of the convergent sequence $k_i$, we have $g^{k_{i_j}}(F(\xi)) \rightarrow q^\prime$. Moreover, $F(f^{k_{i_j}}(\xi)) \rightarrow F(p)$ by the continuity of $F$. Finally, by the definition of immersion, we have $g^{k_{i_j}}(F(\xi)) = F(f^{k_{i_j}}(\xi))$, and thus the two limits are equal, i.e., $q^\prime = F(p) \in F(\omega_{\mc{X}}(\xi))$. We conclude that $\omega_{\mc{Z}}(F(\xi)) \subseteq F(\omega_{\mc{X}}(\xi))$, and thus $F(\omega_{\mc{X}}(\xi)) = \omega_{\mc{Z}}(F(\xi))$.~{\hfill~}\qed
\end{pf}

When $f$ is invertible as a map on $\mc{X}$, Lemma \ref{lma:immersed_limit_set_equality} also applies to the $\alpha$-limit sets.

\begin{cor} \label{cor:immersed_limit_set_equality}
    If $\alpha_{\mc{X}}(\xi)$ is non-empty, then $F(\alpha_{\mc{X}}(\xi)) \subseteq \alpha_{\mc{Z}}(F(\xi))$. Furthermore, if the trajectory through $\xi$ is backward precompact in $\mc{X}$, then $F(\alpha_{\mc{X}}(\xi)) = \alpha_{\mc{Z}}(F(\xi))$.
\end{cor}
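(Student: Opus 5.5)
The plan is to apply Lemma~\ref{lma:immersed_limit_set_equality} to the time-reversed pair of systems. Since $f$ is invertible with continuous inverse, $x_{k+1} = f^{-1}(x_k)$ is a discrete-time system on $\mc{X}$ of the form \eqref{eq:system}. Its $\omega$-limit sets are exactly the $\alpha$-limit sets of the original system, by Definition~\ref{def:alpha_set}. Likewise, its forward precompact trajectories are exactly the backward precompact trajectories of the original system.

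\textbf{Step 1: $F$ immerses the reversed systems.} The corollary is stated in the setting where $g$ is also invertible (as in \ref{cond:t1}). I will check that $F$ immerses $x_{k+1} = f^{-1}(x_k)$ into $z_{k+1} = g^{-1}(z_k)$. From $F \circ f = g \circ F$, apply $g^{-1}$ on the left and $f^{-1}$ on the right to obtain
\[
g^{-1} \circ F = F \circ f^{-1}.
\]
Iterating gives $F(f^{-k}(\xi)) = g^{-k}(F(\xi))$ for all $k \geq 0$. Continuity of $F$ is unchanged, so $F$ is a continuous immersion of the reversed systems.

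\textbf{Step 2: Apply the lemma.} Applying Lemma~\ref{lma:immersed_limit_set_equality} to the reversed systems gives both claims. If $\alpha_{\mc{X}}(\xi) = \omega^{f^{-1}}_{\mc{X}}(\xi)$ is nonempty, then $F(\alpha_{\mc{X}}(\xi)) \subseteq \alpha_{\mc{Z}}(F(\xi))$. If the trajectory is backward precompact, it is forward precompact for $f^{-1}$, and equality follows.

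\textbf{Fallback if $g$ is not invertible.} If one insists on not assuming $g$ invertible, I would instead redo the lemma's argument directly. Pick $k_i \to \infty$ with $f^{-k_i}(\xi) \to p$; then continuity gives $F(f^{-k_i}(\xi)) \to F(p)$. The obstacle in this route is identifying $F(f^{-k_i}(\xi))$ with a backward iterate of $F(\xi)$ under $g$, which requires $g^{-1}$ to exist. So the main point to check is precisely that the hypotheses of Theorem~\ref{thm:super_theorem} supply an invertible $g$ with continuous inverse; otherwise $\alpha_{\mc{Z}}$ is not even defined. Given that, the rest is routine.
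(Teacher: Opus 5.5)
Your proposal is correct and follows the same route as the paper, which obtains this corollary simply by applying Lemma~\ref{lma:immersed_limit_set_equality} to the time-reversed systems $x_{k+1}=f^{-1}(x_k)$ and $z_{k+1}=g^{-1}(z_k)$. Your explicit check that $g^{-1}\circ F = F\circ f^{-1}$ fills in a step the paper leaves implicit, and your observation is right: that check uses the invertibility of $g$ from \ref{cond:t1}, which is also what makes $\alpha_{\mc{Z}}$ well defined.
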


Lemma \ref{lma:immersed_limit_set_equality} and Corollary \ref{cor:immersed_limit_set_equality} show the equivalence of the limit sets of the lifted system and the lifted limit sets of the original system. Although this equivalence does not hold in general for the domains of attraction or repulsion, the following corollaries show that they are still related.

\begin{cor} \label{cor:immersed_domain_subset}
    For any $\xi \in \mc{X}$, if the trajectory through $\xi$ is forward precompact in $\mc{X}$, then $F(\xi) \in D^+_{\mc{Z}}(F(\omega_{\mc{X}}(\xi)))$
\end{cor}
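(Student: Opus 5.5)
The plan is to obtain the claim directly from Lemma~\ref{lma:immersed_limit_set_equality}, by unpacking the definition of the domain of attraction in the lifted system. Fix $\xi \in \mc{X}$ whose trajectory is forward precompact in $\mc{X}$.

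\emph{Step 1: non-emptiness.} By Lemma~\ref{lma:omega_nonempty}, $\omega_{\mc{X}}(\xi)$ is nonempty. Hence $F(\omega_{\mc{X}}(\xi))$ is a well-defined nonempty subset of $\mc{Z}$.

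\emph{Step 2: identify the lifted limit set.} Since the trajectory through $\xi$ is forward precompact, the second part of Lemma~\ref{lma:immersed_limit_set_equality} applies. It gives the equality $F(\omega_{\mc{X}}(\xi)) = \omega_{\mc{Z}}(F(\xi))$. In particular, $F(\omega_{\mc{X}}(\xi))$ is itself an $\omega$-limit set of the lifted system $z_{k+1} = g(z_k)$, namely the one generated by the initial state $F(\xi)$. Therefore $D^+_{\mc{Z}}(F(\omega_{\mc{X}}(\xi)))$ is a legitimately defined domain of attraction in $\mc{Z}$.

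\emph{Step 3: conclude.} By definition, $D^+_{\mc{Z}}(\Omega') = \{\zeta \in \mc{Z} ~|~ \omega_{\mc{Z}}(\zeta) = \Omega'\}$. Take $\Omega' = F(\omega_{\mc{X}}(\xi))$. The point $\zeta = F(\xi)$ satisfies $\omega_{\mc{Z}}(F(\xi)) = \Omega'$ by Step 2. Hence $F(\xi) \in D^+_{\mc{Z}}(F(\omega_{\mc{X}}(\xi)))$.

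There is no real obstacle here. The only point needing care is that the equality in Lemma~\ref{lma:immersed_limit_set_equality} is required, not merely the inclusion, because membership in a domain of attraction demands exact equality of the limit set. That equality is exactly what forward precompactness buys. The same argument with Corollary~\ref{cor:immersed_limit_set_equality} in place of Lemma~\ref{lma:immersed_limit_set_equality} gives the analogous statement $F(\xi) \in D^-_{\mc{Z}}(F(\alpha_{\mc{X}}(\xi)))$ for backward precompact trajectories.
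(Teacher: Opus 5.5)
Your proof is correct and follows the paper's intended route: the paper states this corollary without a separate proof, as an immediate consequence of the equality $F(\omega_{\mc{X}}(\xi)) = \omega_{\mc{Z}}(F(\xi))$ from Lemma~\ref{lma:immersed_limit_set_equality} combined with the definition of $D^+_{\mc{Z}}$. You are also right that equality, which comes from forward precompactness, is needed here, not just the inclusion.
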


Similarly, when $f$ is invertible as a map on $\mc{X}$, Corollary \ref{cor:immersed_domain_subset} also applies to the $\alpha$-limit sets.

% \begin{lem} \label{lma:immersed_domain_subset}
%     Let $F$ be a continuous immersion that maps $x_{k+1} = f(x_k)$ on $\mc{X}$ to $z_{k+1} = g(z_k)$ on $\mc{Z}$. For any $\xi \in \mc{X}$, if the trajectory $f^{k}(\xi)$ is forward precompact in $\mc{X}$, then $F(D^+_{\mc{X}}(\omega_{\mc{X}}(\xi))) \subseteq D^+_{\mc{Z}}(F(\omega_{\mc{X}}(\xi)))$.
% \end{lem}

% \begin{pf}
%     As $f^k(\xi)$ is forward precompact, then by Lemmas \ref{lma:omega_nonempty} and \ref{lma:immersed_limit_set_equality}, $\Omega := \omega_{\mc{X}}(\xi)$ is nonempty, and $F(\Omega) = \omega_{\mc{Z}}(F(\xi))$. Consider any $x \in D_{\mc{X}}^+(\Omega)$. By definition, there exists a sequence $k_i$ such that $k_i \rightarrow \infty$ and $f^{k_i}(x) \rightarrow x^\prime$ for some $x^\prime \in \Omega$. Then,
%     \begin{equation}
%         \begin{split}
%         \lim_{i \rightarrow \infty} g^{k_i}(F(x)) &= \lim_{i \rightarrow \infty} F(f^{k_i}(x)) \\
%         &= F(x^\prime) \in F(\Omega),
%         \end{split}
%     \end{equation}
%     where the first equality is true by the definition of immersions, and the second equality is true by the continuity of $F$. Hence, $F(x) \in D^+_{\mc{Z}}(F(\omega_{\mc{X}}(\xi)))$, and we conclude that $F(D^+_{\mc{X}}(\omega_{\mc{X}}(\xi))) \subseteq D^+_{\mc{Z}}(F(\omega_{\mc{X}}(\xi)))$. {\hfill ~}\qed
% \end{pf}

% When $f$ is invertible as a map on $\mc{X}$, Lemma \ref{lma:immersed_domain_subset} also applies to the $\alpha$-limit sets, as restricted to a backward invariant subset of $\mc{X}$.

\begin{cor} \label{cor:alpha_immersed_domain_subset}
    For any $\xi \in \mc{X}$, if the trajectory through $\xi$ is backward precompact in $\mc{X}$, then $F(\xi) \in D^-_{\mc{Z}}(F(\alpha_{\mc{X}}(\xi)))$.
\end{cor}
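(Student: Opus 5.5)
The statement to prove is Corollary~\ref{cor:alpha_immersed_domain_subset}. The plan is to obtain it as the time-reversed version of Corollary~\ref{cor:immersed_domain_subset}, which is itself a direct consequence of Lemma~\ref{lma:immersed_limit_set_equality}. Since the statement involves $\alpha$-limit sets and $D^-_{\mc{Z}}$, we work under the standing assumption (as in Theorem~\ref{thm:super_theorem}) that $f$ is invertible with continuous inverse and that $g$ is invertible with continuous inverse.

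The first step is to check that the immersion relation survives time reversal. From $F \circ f = g \circ F$ and the invertibility of $f$ and $g$, we compose with $g^{-1}$ on the left and $f^{-1}$ on the right to get $g^{-1} \circ F = F \circ f^{-1}$. Thus $F$ is also a continuous immersion of the reversed system $x_{k+1} = f^{-1}(x_k)$ on $\mc{X}$ into $z_{k+1} = g^{-1}(z_k)$ on $\mc{Z}$. Iterating gives $F(f^{-k}(\xi)) = g^{-k}(F(\xi))$ for all $k \geq 0$. Under this reversal, the $\alpha$-limit sets, backward precompactness and domains of repulsion of the original systems become, respectively, the $\omega$-limit sets, forward precompactness and domains of attraction of the reversed systems. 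These correspondences follow directly from Definitions~\ref{def:alpha_set} and~\ref{def:Precompact}.

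Next, let $\xi$ have a backward precompact trajectory. Applying Lemma~\ref{lma:omega_nonempty} to the reversed system shows that $\alpha_{\mc{X}}(\xi)$ is nonempty. Corollary~\ref{cor:immersed_limit_set_equality}, which is Lemma~\ref{lma:immersed_limit_set_equality} applied to the reversed pair, then gives $F(\alpha_{\mc{X}}(\xi)) = \alpha_{\mc{Z}}(F(\xi))$. By the definition of the domain of repulsion, $F(\xi) \in D^-_{\mc{Z}}(\alpha_{\mc{Z}}(F(\xi))) = D^-_{\mc{Z}}(F(\alpha_{\mc{X}}(\xi)))$, which is the claim.

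I do not expect any real obstacle. The only point needing care is that $g^{-1}$ exists and that the iterates of the reversed system agree with $F(f^{-k}(\xi))$. This is exactly why invertibility of $g$ (as in \ref{cond:t1}) is needed, and with it the argument is a direct substitution.
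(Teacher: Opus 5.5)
Your proposal is correct and follows the paper's route. The paper gives no separate proof and obtains this corollary as the time-reversed version of Corollary~\ref{cor:immersed_domain_subset}, i.e., Lemma~\ref{lma:omega_nonempty} plus Lemma~\ref{lma:immersed_limit_set_equality} applied to the reversed pair $(f^{-1}, g^{-1})$, exactly as you do; your explicit check that $g^{-1} \circ F = F \circ f^{-1}$ (which uses invertibility of $g$ from \ref{cond:t1}, not just of $f$) fills in a detail the paper leaves implicit.
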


The following lemma provides sufficient conditions for two distinct limit sets to have the same image under $F$, which is central to the proof of Theorem \ref{thm:super_theorem}.

\begin{lem} \label{lma:nonempty_domains}
    Let $\widehat{\mc{W}}$ be the set of $\omega$-limit sets $\Omega \in \mc{W}$ whose domain of attraction in $\mc{X}$ contains at least one trajectory forward precompact in $\mc{X}$. Further, let $\widehat{\mc{A}}$ be the set of $\alpha$-limit sets $\Gamma \in \mc{A}$ whose domain of repulsion in $\mc{X}$ contains at least one trajectory backward precompact in $\mc{X}$. Then, under conditions \ref{cond:t1}--\ref{cond:t3}, for any pair $S, S^\prime \in \widehat{\mc{W}} \cup \widehat{\mc{A}}$, their images under $F$ are equal.
\end{lem}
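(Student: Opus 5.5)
Following the continuous-time argument of \cite{liu_properties_2025}, the plan is to partition $\mc{X}$ into closed sets indexed by the images of limit sets and then use path connectedness. For each $S \in \widehat{\mc{W}}$ I set $E^+(S) := F^{-1}\bigl(D^+_{\mc{Z}}(F(S))\bigr)$, and for each $S \in \widehat{\mc{A}}$ I set $E^-(S) := F^{-1}\bigl(D^-_{\mc{Z}}(F(S))\bigr)$. First, Lemma~\ref{lma:immersed_limit_set_equality} and Corollary~\ref{cor:immersed_limit_set_equality} applied to a precompact trajectory in the relevant domain show that $F(S)$ is indeed an $\omega$-limit (resp.\ $\alpha$-limit) set of the lifted system, so these domains make sense. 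By \ref{cond:t1} the basins $D^\pm_{\mc{Z}}$ are closed and $F$ is continuous, so every $E^\pm(S)$ is closed in $\mc{X}$. By \ref{cond:t3} there are only countably many of them.

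Next I show that these sets cover $\mc{X}$. Let $\xi \in \mc{X}$. By \ref{cond:t2}, its trajectory is forward or backward precompact; say forward. Lemma~\ref{lma:omega_nonempty} gives $\Omega := \omega(\xi) \neq \emptyset$, and since $\xi \in D^+_{\mc{X}}(\Omega)$ has a precompact trajectory, $\Omega \in \widehat{\mc{W}}$. Corollary~\ref{cor:immersed_domain_subset} then gives $F(\xi) \in D^+_{\mc{Z}}(F(\Omega))$, that is, $\xi \in E^+(\Omega)$. The backward case is symmetric, using Corollary~\ref{cor:alpha_immersed_domain_subset}.

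I then check that the sets are disjoint unless the images coincide. Basins of distinct $\omega$-limit sets are disjoint, so $E^+(S) \cap E^+(S') \neq \emptyset$ forces $F(S) = F(S')$, and likewise for $E^-$. For a mixed pair, a point $x \in E^+(S) \cap E^-(S')$ gives $F(x) \in D^+_{\mc{Z}}(F(S)) \cap D^-_{\mc{Z}}(F(S'))$. Since the lifted system and its time reversal have closed basins and $g$ is a homeomorphism, Lemma~\ref{lma:omega_alpha_limit_equal} applied in $\mc{Z}$ yields $F(S) = F(S')$. Grouping the sets by image value, I obtain a countable family of nonempty, pairwise disjoint closed sets $C_1, C_2, \dots$ whose union is $\mc{X}$. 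Nonemptiness holds because each contains the precompact witness point.

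Finally I invoke Sierpi\'nski's theorem: a compact connected Hausdorff space cannot be partitioned into countably many (at least two) pairwise disjoint nonempty closed sets. If there were two distinct image values, I would pick $x \in C_1$ and $y \in C_2$ and a path $\gamma:[0,1]\to\mc{X}$ joining them, which exists by Assumption~\ref{assum:path_connected}. The sets $\gamma^{-1}(C_i)$ would then partition $[0,1]$ into countably many, at least two, disjoint nonempty closed sets, which is a contradiction. Hence all images $F(S)$ for $S \in \widehat{\mc{W}} \cup \widehat{\mc{A}}$ coincide. I expect the main obstacle to be the mixed-pair disjointness. It requires carefully verifying that Lemma~\ref{lma:omega_alpha_limit_equal} transfers to $(\mc{Z}, g)$, in particular the invariance of limit sets there, and that $F(S)$ is genuinely a limit set of the lifted system rather than merely contained in one.
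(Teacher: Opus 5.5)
Your proposal is correct and takes essentially the same route as the paper: cover by closed lifted basins, get disjointness for distinct images via Lemma~\ref{lma:omega_alpha_limit_equal} applied to $(\mc{Z}, g)$, and derive a Sierpi\'nski-type contradiction from path connectedness. The differences are minor. You pull back to closed sets in $\mc{X}$ via $F^{-1}$ instead of working in $F(\mc{X})$ with the subspace topology, and you obtain the path-connected version of Sierpi\'nski's theorem by restricting to a path; your explicit grouping by image value, which stays closed because $E^\pm(S)$ depends only on $F(S)$, makes precise a disjointness step the paper glosses over.
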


\begin{pf}
    For any limit set $S \in \widehat{\mc{W}} \cup \widehat{\mc{A}}$, define
    \begin{equation}
    \begin{split}
        D_{\mc{X}}(S) &= \begin{cases}
            D^+_{\mc{X}}(S) & S \in \widehat{\mc{W}} \setminus \widehat{\mc{A}} \\
            D^-_{\mc{X}}(S) & S \in \widehat{\mc{A}} \setminus \widehat{\mc{W}} \\
            D^+_{\mc{X}}(S) \cup D^-_{\mc{X}}(S) & S \in \widehat{\mc{W}} \cap \widehat{\mc{A}},
        \end{cases} \\
        D_{\mc{Z}}(S) &= \begin{cases}
            D^+_{\mc{Z}}(F(S)) & S \in \widehat{\mc{W}} \setminus \widehat{\mc{A}} \\
            D^-_{\mc{Z}}(F(S)) & S \in \widehat{\mc{A}} \setminus \widehat{\mc{W}} \\
            D^+_{\mc{Z}}(F(S)) \cup D^-_{\mc{Z}}(F(S)) & S \in \widehat{\mc{W}} \cap \widehat{\mc{A}}.
        \end{cases}
    \end{split}
    \end{equation}
    Then, for any $S, S^\prime \in \widehat{\mc{W}} \cup \widehat{\mc{A}}$ such that $F(S) \neq F(S^\prime)$, we have
    \begin{equation}
        \begin{split}
            D^+_{\mc{Z}}(F(S)) \cap D^+_{\mc{Z}}(F(S^\prime)) &= \emptyset, \\
            D^-_{\mc{Z}}(F(S)) \cap D^-_{\mc{Z}}(F(S^\prime)) &= \emptyset, \\
            D^+_{\mc{Z}}(F(S)) \cap D^-_{\mc{Z}}(F(S^\prime)) &= \emptyset, \\
            D^-_{\mc{Z}}(F(S)) \cap D^+_{\mc{Z}}(F(S^\prime)) &= \emptyset. \\
        \end{split}
    \end{equation}
    The first two equations are true by definition of the domain of attraction, while the last two follow by Lemma \ref{lma:omega_alpha_limit_equal} and the fact that the immersed system has closed basins. As such, if $F(S) \neq F(S^\prime)$ then $D_{\mc{Z}}(S)$ and $D_{\mc{Z}}(S^\prime)$ are disjoint.

    By condition \ref{cond:t2}, every trajectory is either forward or backward precompact. Thus, for any $\xi \in \mc{X}$, there exists $S_i \in \widehat{\mc{W}} \cup \widehat{\mc{A}}$ such that $\xi \in D_{\mc{X}}(S_i)$, and $i \in I$, a countable set by condition \ref{cond:t3}. As $\xi$ was arbitrary, we have
    \begin{equation} \label{eq:super_F_partition}
        F(\mc{X}) = \bigcup_{i \in I} \left(D_{\mc{Z}}(F(S_i)) \cap F(\mc{X})\right),
    \end{equation}
    where inclusion follows from Corollaries \ref{cor:immersed_domain_subset} and \ref{cor:alpha_immersed_domain_subset}, and equality from the intersection with $F(\mc{X})$.
    % Further, by Lemmas \ref{lma:immersed_limit_set_equality} and \ref{lma:immersed_domain_subset}, and Corollaries \ref{cor:immersed_limit_set_equality} and \ref{cor:immersed_domain_subset}, we have
    % \begin{equation} \label{eq:super_F_partition}
    % \begin{split}
    %     F(\mc{X}) &= \bigcup_{i \in I} F(D_{\mc{X}}(S_i)) \\
    %     &= \bigcup_{i \in I} (D_{\mc{Z}}(F(S_i)) \cap F(\mc{X})).
    % \end{split}
    % \end{equation}
    By condition \ref{cond:t1}, $D_{\mc{Z}}(F(S_i))$ is closed, and thus its intersection with $F(\mc{X})$ is closed with respect to the subspace topology induced on $F(\mc{X})$, for all $i \in I$. Importantly, $D_{\mc{Z}}(F(S_i)) \cap F(\mc{X})$ must be nonempty, as $D_{\mc{Z}}(F(S_i))$ is closed and the set of its limit points, $F(S_i) \subseteq F(\mc{X})$, is nonempty.
    
    Suppose there exist $S$ and $S^\prime$ such that $F(S) \neq F(S^\prime)$. Then, by Equation \eqref{eq:super_F_partition}, $F(\mc{X})$ is a disjoint union of a countable collection of closed sets. By Assumption \ref{assum:path_connected}, $\mc{X}$ is path connected, and as $F$ is continuous, $F(\mc{X})$ is path connected as well. Thus, by the extension to a theorem from \cite{sierpinski_theoreme_1918} given by \cite{liu_properties_2025}, only one of the sets in $\{D_{\mc{Z}}(F(S_i)) \cap F(\mc{X})\}_{i \in I}$ can be nonempty, contradicting that $D_{\mc{Z}}(F(S_i)) \cap F(\mc{X})$ is nonempty for all $i \in I$. As such, $F(S) = F(S^\prime)$ for all $S, S^\prime \in \widehat{\mc{W}} \cup \widehat{\mc{A}}$. {\hfill ~}\qed
\end{pf}

\begin{pf*}{Proof of Theorem \ref{thm:super_theorem}.}
    Condition \ref{cond:t2} and Lemma \ref{lma:omega_nonempty} guarantee that $\widehat{\mc{W}} \cup \widehat{\mc{A}}$ is nonempty. Take any element $S_{\max} \in \widehat{\mc{W}} \cup \widehat{\mc{A}}$, and consider any $S \in \mc{W} \cup \mc{A}$. We show that $F(S) \subseteq F(S_{\max})$. There are three cases.
    \begin{enumerate}
        \item $S \in \widehat{\mc{W}} \cup \widehat{\mc{A}}$. Then, by Lemma \ref{lma:nonempty_domains}, $F(S) = F(S_{\max})$.

        \item $S \in \mc{W} \setminus (\widehat{\mc{W}} \cup \widehat{\mc{A}})$. In this case, pick any $\xi \in D_{\mc{X}}^+(S)$. As $S \notin \widehat{\mc{W}} \cup \widehat{\mc{A}}$, then the trajectory through $\xi$ must not be forward precompact in $\mc{X}$, and by condition \ref{cond:t2}, must instead be backward precompact. By Lemma \ref{lma:omega_nonempty}, $\alpha_{\mc{X}}(\xi)$ exists, and belongs to $\widehat{\mc{A}}$. It follows from Corollary \ref{cor:immersed_limit_set_equality} and Lemma \ref{lma:nonempty_domains} that $F(\alpha_{\mc{X}}(\xi)) = \alpha_{\mc{Z}}(F(\xi)) = F(S_{\max})$. Further, we have by Lemma \ref{lma:immersed_limit_set_equality} that $\omega_{\mc{Z}}(F(\xi))$ is nonempty and contains $F(S)$. Lastly, as $\xi$ has nonempty $\omega$-limit and $\alpha$-limit sets, we have that $F(\xi) \in D_{\mc{Z}}^+(\omega_{\mc{Z}}(F(\xi))) \cap D_{\mc{Z}}^-(\alpha_{\mc{Z}}(F(\xi)))$, and Lemma \ref{lma:omega_alpha_limit_equal} implies that $\omega_{\mc{Z}}(F(\xi)) = \alpha_{\mc{Z}} (F(\xi))$. Thus,
        \begin{equation}
            F(S) \subseteq \omega_{\mc{Z}} (F(\xi)) = \alpha_{\mc{Z}} (F(\xi)) = F(S_{\max}).
        \end{equation}

        \item $S \in \mc{A} \setminus (\widehat{\mc{W}} \cup \widehat{\mc{A}})$. Applying the same argument as was applied in Case 2, one can show that $F(S) \subseteq F(S_{\max})$
    \end{enumerate}
    Lastly, under condition \ref{cond:t4}, we have $\mc{W} \cup \mc{A} = 
    \widehat{\mc{W}} \cup \widehat{\mc{A}}$, and by Lemma \ref{lma:nonempty_domains}, $F(S) = F(S_{\max})$ for any $S \in \mc{W} \cup \mc{A}$. We conclude that, in this case, $\{F(S) ~|~ S \in \mc{W} \cup \mc{A}\}$ has exactly one element. {\hfill ~}\qed 
\end{pf*}

\subsection{Proof of Theorem \ref{thm:main_theorem}}

Using a similar argument as the proof for Lemma \ref{lma:nonempty_domains} as well as Lemmas \ref{lma:immersed_limit_set_equality} and \ref{cor:immersed_domain_subset}, we are able to prove Theorem \ref{thm:main_theorem} directly. The proof follows similarly to the proof of the main theorem of \cite{liu_non-existence_2023}, adapted to the discrete-time setting.

\begin{pf*}{Proof of Theorem \ref{thm:main_theorem}.}
    By condition \ref{cond:t2p}, every trajectory is forward precompact. Thus, for any $\xi \in \mc{X}$, there exists $S_i \in \mc{W}$ such that $\xi \in D_{\mc{X}}^+(S_i)$, and $i \in I$, a countable set by condition \ref{cond:t3p}. As $\xi$ was arbitrary, we have
    \begin{equation} \label{eq:main_F_partition}
        F(\mc{X}) = \bigcup_{i \in I} \left(
            D^+_{\mc{Z}} (F(\Omega_i)) \cap F(\mc{X})
        \right),
    \end{equation}
    where inclusion follows from Corollary \ref{cor:immersed_domain_subset}, and equality from the intersection with $F(\mc{X})$. By condition \ref{cond:t1p}, $D^+_{\mc{Z}} (F(\Omega_i))$ is closed, and thus $D^+_{\mc{Z}} (F(\Omega_i)) \cap F(\mc{X})$ is closed in the subspace topology induced on $F(\mc{X})$, for all $i \in I$. Importantly, $D^+_{\mc{Z}}(F(\Omega_i)) \cap F(\mc{X})$ must be nonempty, as $D^+_{\mc{Z}}(F(\Omega_i))$ is closed and the set of its limit points, $F(\Omega_i) \subseteq F(\mc{X})$, is nonempty.
    
    Domains of attraction are disjoint, and as such, for $\omega$-limit sets $\Omega, \Omega^\prime \in \mc{W}$, either $F(\Omega) = F(\Omega^\prime)$, or $D^+_{\mc{Z}} (F(\Omega))$ and $D^+_{\mc{Z}} (F(\Omega^\prime))$ are disjoint. Suppose there exist $\Omega$ and $\Omega^\prime$ such that the second case is true. Then, by Equation \eqref{eq:main_F_partition}, $F(\mc{X})$ is a disjoint union of a countable collection of closed sets. By Assumption \ref{assum:path_connected}, $\mc{X}$ is path connected, and as $F$ is continuous, $F(\mc{X})$ is path connected as well. Thus, by the extension to a theorem from \cite{sierpinski_theoreme_1918} given by \cite{liu_non-existence_2023}, only one of the sets in $\{D^+_{\mc{Z}} (F(\Omega_i)) \cap F(\mc{X})\}_{i \in I}$ can be nonempty, contradicting that $D^+_{\mc{Z}} (F(\Omega_i)) \cap F(\mc{X})$ is nonempty for all $i \in I$. As such, $F(\Omega) = F(\Omega^\prime)$ for all $\Omega, \Omega^\prime \in \mc{W}$. {\hfill ~}\qed
\end{pf*}

\section{Examples}
% Can we come up with an example that has multiple omega-limit sets AND an unbounded trajectory of the system?

\begin{exmp} \label{ex:example_1}
Consider the 1-dimensional system
\begin{equation} \label{eq:example_1_sys}
    x_{k+1} = -\frac{3 x_k - 1}{x_k - 3}.
\end{equation}
There are two $\omega$-limit sets of this system, namely $\{-1\}$ and $\{1\}$. Let $\mc{X} = (-\infty, 1)$, which only contains one $\omega$-limit set, $\{-1\}$. We can map the system in \eqref{eq:example_1_sys} to the linear system $z_{k+1} = (1/2) z_k$ with the one-to-one mapping
\begin{equation} \label{eq:example_1_immersion}
z = F(x) = \frac{x+1}{x-1}.
\end{equation}
Note that when the domain is extended to $\mc{X}^\prime = (-\infty, 1]$, that $F$ is not an immersion as $F(1)$ is not defined. This can be explained as all trajectories are precompact, but $\mc{X}^\prime$ contains two $\omega$-limit sets, so by Corollary \ref{cor:main_corollary}, $\mc{X}^\prime$ does not have a one-to-one linear immersion.
\end{exmp}

\begin{exmp} \label{ex:alpha_limit_example}
The inverse of the system in Example~\ref{ex:example_1} is
\begin{equation} \label{eq:example_1_sys_inverse}
    x_{k+1} = \frac{3 x_{k} + 1}{x_{k} + 3}.
\end{equation}
As the $\alpha$-limit sets of the system in \eqref{eq:example_1_sys} are exactly the $\omega$-limit sets of the inverse dynamics in \eqref{eq:example_1_sys_inverse}, then the system has two $\alpha$-limit sets, namely $\{-1\}$ and $\{1\}$. Consider the backward invariant set $\mc{X} = (-1, \infty)$, which only contains one $\alpha$-limit set, $\{1\}$. We can map the system in \eqref{eq:example_1_sys_inverse} to the linear system $z_{k+1} = (1/2) z_{k}$ with the one-to-one mapping
\begin{equation} \label{eq:example_1_map_inverse}
    z = F(x) = \frac{x - 1}{x + 1}.
\end{equation}
When the domain is extended to $\mc{X}^\prime = [-1,\infty)$, $F$ is not an immersion as $F(-1)$ is not defined. This can be explained as all trajectories are precompact, but $\mc{X}^\prime$ contains two $\alpha$-limit sets, so by Corollary \ref{cor:main_corollary}, $\mc{X}^\prime$ does not have a one-to-one linear immersion.
\end{exmp}

% \begin{rem}
%     The immersions in Examples \ref{ex:example_1} and \ref{ex:alpha_limit_example} are well defined with respect to the forward invariant domain of attraction for the $\omega$-limit set $\{-1\}$ and the backward invariant domain of repulsion for the $\alpha$-limit set $\{1\}$. 
% \end{rem}

\begin{exmp} \label{ex:example_2}
Consider the 1-dimensional system
\begin{equation} \label{eq:example_2_sys}
    x_{k+1} = 2 \cot^{-1}\left( \frac{1}{\sqrt{2}} \cot\left(\frac{x_k}{2}\right)\right),
\end{equation}
whose right-hand side continuously extends to the origin such that, by letting $\mc{X} = [0,\pi]$, the $\omega$-limit and $\alpha$-limit sets of the system are $\{0\}$ and $\{\pi\}$. If we define $y = F(x) = \cos(x)$, then the system in \eqref{eq:example_2_sys} can be immersed in the system
\begin{equation} \label{eq:example_2_good_immersion}
\begin{split}
    y_{k+1} &= -\frac{3 y_k - 1}{y_k - 3},
\end{split}
\end{equation}
with $|y| \leq 1$. That is, the system in Eqn.~\eqref{eq:example_2_sys} on $\mc{X}$ can be immersed in the system in Eqn.~\eqref{eq:example_1_sys} on $\mc{Z} = [-1,1]$. Notably, all trajectories are precompact, a one-to-one immersion exists, but $\mc{W}$ has two elements. By Theorem \ref{thm:main_theorem}, this is only possible if the system in Eqn.~\eqref{eq:example_2_good_immersion} does not have closed basins. Indeed, we can see that $D^+_{\mc{Z}}(\{-1\})$ is $[-1,1)$, an open set. If the linear immersion in Example \ref{ex:example_1} is used for $y$, we see that the immersion in Eqn.~\eqref{eq:example_1_immersion} is not defined at $x=0$. This can be explained as all trajectories are precompact in $\mc{X}$, but the interval $[0,\pi]$ contains two $\omega$-limit sets.
\end{exmp}

\begin{exmp} \label{ex:unit_circle_example}
\begin{figure}
    \centering
    \includegraphics[width=0.8\linewidth]{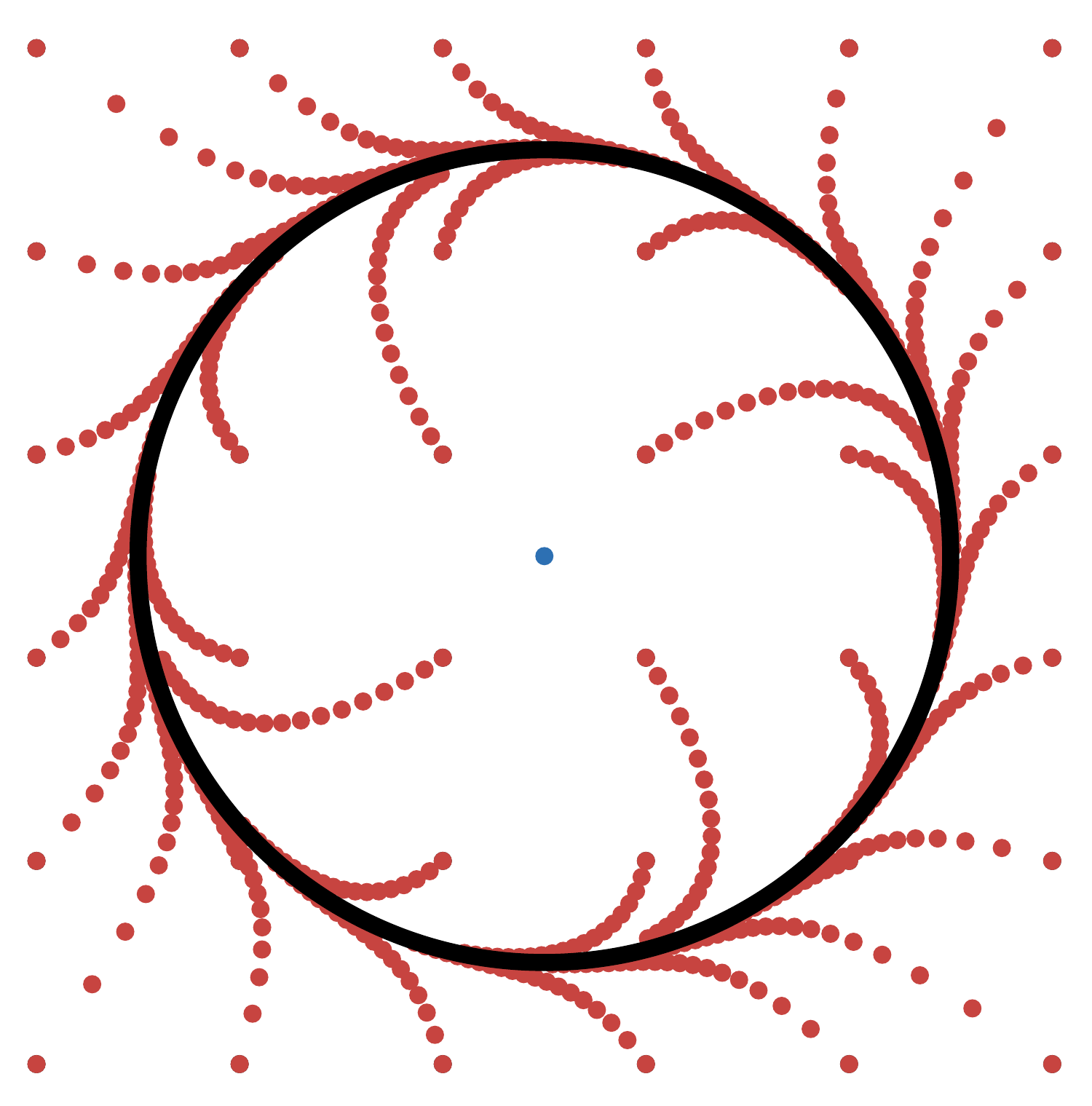}
    % \includesvg[width=0.8\linewidth]{figs/unit_circle_example.svg}
    \caption{Several sample trajectories of the two-dimensional dynamics in Example \ref{ex:unit_circle_example}. The non-trivial $\omega$-limit set, the unit circle, is shown in black. The isolated fixed point is shown in blue.}
    \label{fig:unit_circle_example}
\end{figure}

Consider the 2-dimensional system in $\mathbb{R}^2$
\begin{equation} \label{eq:unit_circle_sys}
    \begin{bmatrix}
        x^1_{k+1} \\ x^2_{k+1}
    \end{bmatrix} = \frac{2}{\|x_k\|_2 + 1} \begin{bmatrix}
        \cos{\theta} & \sin{\theta} \\ -\sin{\theta} & \cos{\theta}
    \end{bmatrix} \begin{bmatrix}
        x^1_k \\ x^2_k
    \end{bmatrix},
\end{equation}
having irrational parameter $\theta \in \mathbb{R} \setminus \mathbb{Q}$. One can think of this system as sequentially rotating and then scaling a two-dimensional vector. Due to the irrationality of $\theta$ (in degrees), the only two $\omega$-limit sets of this system are $\{0\}$ and the unit circle. See Fig.~\ref{fig:unit_circle_example}.

Let $\mc{X} = \mathbb{R}^2 \setminus \{0\}$. Define the one-to-one function $F: \mc{X} \rightarrow \mathbb{R}^3$ given by
\begin{equation}
    F(x) = \begin{bmatrix}
        \frac{x^1}{\|x\|_2} & \frac{x^2}{\|x\|_2} & \|x\|_2
    \end{bmatrix}^\intercal.
\end{equation}
Given a solution $x_k$ of Eqn.~\eqref{eq:unit_circle_sys}, it can be checked that $\begin{bmatrix}
    u_k & v_k & r_k
\end{bmatrix}^\intercal = F(x_k)$ is a solution to the system
\begin{align}
\begin{split}
\begin{bmatrix}
    u_{k+1} \\ v_{k+1}
\end{bmatrix} = \begin{bmatrix}
    \cos{\theta} & \sin{\theta} \\ -\sin{\theta} & \cos{\theta}
\end{bmatrix}
\begin{bmatrix}
    u_{k} \\ v_{k}
\end{bmatrix},\;
r_{k+1} = \frac{2 r_k}{r_k + 1}.
\end{split}
\end{align}
Similar to Example \ref{ex:example_1}, $r_k$ on $(0, \infty)$ can be immersed in a system $w_{k+1} = \frac{1}{2} w_k$ where we let
\begin{equation}
    w_k = \frac{r_k - 1}{r_k}.
\end{equation}
Thus, the system of $x$ in Eqn.~\eqref{eq:unit_circle_sys} on $\mc{X}$ can be immersed in the following linear system:
\begin{equation}
\begin{split}
    \begin{bmatrix}
        u_{k+1} \\ v_{k+1} \\ w_{k+1}
    \end{bmatrix} &= \begin{bmatrix}
        \cos{\theta} & \sin{\theta} & 0 \\ -\sin{\theta} & \cos{\theta} & 0 \\ 0 & 0 & 1/2
    \end{bmatrix} \begin{bmatrix}
        u_k \\ v_k \\ w_k
    \end{bmatrix},
\end{split}
\end{equation}
with a one-to-one immersion $\bar{F}$ 
\begin{equation}
    \begin{bmatrix}
        u & v & w
    \end{bmatrix}^\intercal = \bar{F}(x) = \begin{bmatrix}
        \frac{x^1}{\|x\|_2} & \frac{x^2}{\|x\|_2} & \frac{\|x\|_2 - 1}{\|x\|_2}
    \end{bmatrix}^\intercal.
\end{equation}
The immersion $\bar{F}$ is continuous in $\mc{X}$. However, if we extend $\mc{X}$ to $\mathbb{R}^2$, $\bar{F}$ is discontinuous at the origin, and thus not an immersion. This can be explained by Corollary \ref{cor:main_corollary}: as all trajectories are precompact and the set $\mathbb{R}^2$ contains two $\omega$-limit sets, there does not exist a one-to-one linear immersion for the system on $\mathbb{R}^2$.
\end{exmp}

\begin{rem}
    In Example \ref{ex:unit_circle_example}, if we instead choose $\theta$ to be rational, then there are uncountably many $\omega$-limit sets densely covering the unit circle, and Theorems \ref{thm:super_theorem} and \ref{thm:main_theorem} do not apply. In general, for discrete-time systems, such cases are easy to construct; for example, the system $x_{k+1} = -x_k$ is linear, but has uncountably many $\omega$-limit sets.
\end{rem}

\section{Conclusion}

In this work, we have shown that a discrete-time system with multiple limit sets cannot admit one-to-one continuous immersions to systems with closed basins under the conditions that \ref{cond:t2}, trajectories are precompact, and \ref{cond:t3}, there are at most countably many limit sets. More generally, under these conditions, this work suggests that continuous immersions necessarily collapse limit sets. These results show that the same topological obstructions which prevent continuous-time systems from admitting immersions also apply to discrete-time systems. 

\emph{Acknowledgements:} The authors would like to thank Zexiang Liu and Matthew Kvalheim for many constructive comments in a preliminary version, as well as the anonymous reviewers for their valuable feedback.

% \section*{DECLARATION OF GENERATIVE AI AND AI-ASSISTED TECHNOLOGIES IN THE WRITING PROCESS}
% During the preparation of this work the author(s) used [NAME TOOL / SERVICE] in order to [REASON]. After using this tool/service, the author(s) reviewed and edited the content as needed and take(s) full responsibility for the content of the publication.

\bibliography{refs}

\appendix

\end{document}